\definecolor{shadecolor}{rgb}{0.9,0.9,0.9}
\newtheorem{definition}{Definition}
\newtheorem{proposition}{Proposition}
\newtheorem{lemma}[proposition]{Lemma}
\newtheorem{theorem}[proposition]{Theorem}
\newtheorem{corollary}[proposition]{Corollary}
\def\squareforqed{\hbox{\rlap{$\sqcap$}$\sqcup$}}
\def\qed{\ifmmode\squareforqed\else{\unskip\nobreak\hfil
\penalty50\hskip1em\null\nobreak\hfil\squareforqed
\parfillskip=0pt\finalhyphendemerits=0\endgraf}\fi}
\def\endenv{\ifmmode\;\else{\unskip\nobreak\hfil
\penalty50\hskip1em\null\nobreak\hfil\;
\parfillskip=0pt\finalhyphendemerits=0\endgraf}\fi}
\newenvironment{proof}{\noindent \textbf{{Proof~} }}{\hfill $\blacksquare$}
\newcounter{remark}
\newenvironment{remark}[1][]{\refstepcounter{remark}\par\medskip\noindent%
\textbf{Remark~\theremark #1} }{\medskip}
\newcounter{example}
\mathchardef\ordinarycolon\mathcode`\:
\def\vcentcolon{\mathrel{\mathop\ordinarycolon}}
\newmdenv[skipabove=7pt,
skipbelow=7pt,
backgroundcolor=darkblue!15,
innerleftmargin=5pt,
innerrightmargin=5pt,
innertopmargin=5pt,
leftmargin=0cm,
rightmargin=0cm,
innerbottommargin=5pt,
linewidth=1pt]{tBox}
\newmdenv[skipabove=7pt,
skipbelow=7pt,
backgroundcolor=red!15,
innerleftmargin=5pt,
innerrightmargin=5pt,
innertopmargin=5pt,
leftmargin=0cm,
rightmargin=0cm,
innerbottommargin=5pt,
linewidth=1pt]{rBox}
\newmdenv[skipabove=7pt,
skipbelow=7pt,
backgroundcolor=blue2!25,
innerleftmargin=5pt,
innerrightmargin=5pt,
innertopmargin=5pt,
leftmargin=0cm,
rightmargin=0cm,
innerbottommargin=5pt,
linewidth=1pt]{dBox}
\newmdenv[skipabove=7pt,
skipbelow=7pt,
backgroundcolor=darkkblue!15,
innerleftmargin=5pt,
innerrightmargin=5pt,
innertopmargin=5pt,
leftmargin=0cm,
rightmargin=0cm,
innerbottommargin=5pt,
linewidth=1pt]{sBox}
\definecolor{darkblue}{RGB}{0,76,156}
\definecolor{darkkblue}{RGB}{0,0,153}
\definecolor{blue2}{RGB}{102,178,255}
\definecolor{darkred}{RGB}{195,0,0}
\definecolor{shadecolor}{gray}{0.9}
\definecolor{honeydew}{rgb}{0.94, 1.0, 0.94}
\newcommand{\nc}{\newcommand}
\nc{\rnc}{\renewcommand}
\nc{\lbar}[1]{\overline{#1}}
\nc{\bra}[1]{\langle#1|}
\nc{\ket}[1]{|#1\rangle}
\nc{\ketbra}[2]{|#1\rangle\!\langle#2|}
\nc{\braket}[2]{\langle#1|#2\rangle}
\nc{\proj}[1]{| #1\rangle\!\langle #1 |}
\nc{\avg}[1]{\langle#1\rangle}
\nc{\rank}{\operatorname{Rank}}
\nc{\smfrac}[2]{\mbox{$\frac{#1}{#2}$}}
\nc{\tr}{\operatorname{Tr}}
\nc{\ox}{\otimes}
\nc{\dg}{\dagger}
\nc{\dn}{\downarrow}
\nc{\cA}{{\cal A}}
\nc{\cB}{{\cal B}}
\nc{\cC}{{\cal C}}
\nc{\cD}{{\cal D}}
\nc{\cE}{{\cal E}}
\nc{\cF}{{\cal F}}
\nc{\cG}{{\cal G}}
\nc{\cH}{{\cal H}}
\nc{\cI}{{\cal I}}
\nc{\cJ}{{\cal J}}
\nc{\cK}{{\cal K}}
\nc{\cL}{{\cal L}}
\nc{\cM}{{\cal M}}
\nc{\cN}{{\cal N}}
\nc{\cO}{{\cal O}}
\nc{\cP}{{\cal P}}
\nc{\cQ}{{\cal Q}}
\nc{\cR}{{\cal R}}
\nc{\cS}{{\cal S}}
\nc{\cT}{{\cal T}}
\nc{\cU}{{\cal U}}
\nc{\cV}{{\cal V}}
\nc{\cX}{{\cal X}}
\nc{\cY}{{\cal Y}}
\nc{\cZ}{{\cal Z}}
\nc{\cW}{{\cal W}}
\nc{\csupp}{{\operatorname{csupp}}}
\nc{\qsupp}{{\operatorname{qsupp}}}
\nc{\var}{{\operatorname{var}}}
\nc{\rar}{\rightarrow}
\nc{\lrar}{\longrightarrow}
\nc{\polylog}{{\operatorname{polylog}}}
\nc{\wt}{{\operatorname{wt}}}
\nc{\av}[1]{{\left\langle {#1} \right\rangle}}
\nc{\supp}{{\operatorname{supp}}}
\nc{\argmin}{{\operatorname{argmin}}}
\def\x{\xi}
\nc{\RR}{{{\mathbb R}}}
\nc{\CC}{{{\mathbb C}}}
\nc{\FF}{{{\mathbb F}}}
\nc{\NN}{{{\mathbb N}}}
\nc{\ZZ}{{{\mathbb Z}}}
\nc{\PP}{{{\mathbb P}}}
\nc{\QQ}{{{\mathbb Q}}}
\nc{\UU}{{{\mathbb U}}}
\nc{\EE}{{{\mathbb E}}}
\nc{\id}{{\operatorname{id}}}
\nc{\CHSH}{{\operatorname{CHSH}}}
\nc{\suc}{{\text{suc}}}
\nc{\be}{\begin{equation}}
\nc{\ee}{{\end{equation}}}
\nc{\bea}{\begin{eqnarray}}
\nc{\eea}{\end{eqnarray}}
\nc{\rU}{\mbox{U}}
\nc{\ob}[1]{#1}
\nc{\SEP}{{\text{\rm SEP}}}
\nc{\NS}{{\text{\rm NS}}}
\nc{\LOCC}{{\text{\rm LOCC}}}
\nc{\PPT}{{\text{\rm PPT}}}
\nc{\EXT}{{\text{\rm EXT}}}
\nc{\Sym}{{\operatorname{Sym}}}
\nc{\ERLO}{{E_{\text{r,LO}}}}
\nc{\ERLOCC}{{E_{\text{r,LOCC}}}}
\nc{\ERPPT}{{E_{\text{r,PPT}}}}
\nc{\ERLOCCinfty}{{E^{\infty}_{\text{r,LOCC}}}}
\nc{\Aram}{{\operatorname{\sf A}}}
\def\grd@save@target#1{%
  \def\grd@target{#1}}
\def\grd@save@start#1{%
  \def\grd@start{#1}}
\tikzset{
  grid with coordinates/.style={
    to path={%
      \pgfextra{%
        \edef\grd@@target{(\tikztotarget)}%
        \tikz@scan@one@point\grd@save@target\grd@@target\relax
        \edef\grd@@start{(\tikztostart)}%
        \tikz@scan@one@point\grd@save@start\grd@@start\relax
        \draw[minor help lines,magenta] (\tikztostart) grid (\tikztotarget);
        \draw[major help lines] (\tikztostart) grid (\tikztotarget);
        \grd@start
        \pgfmathsetmacro{\grd@xa}{\the\pgf@x/1cm}
        \pgfmathsetmacro{\grd@ya}{\the\pgf@y/1cm}
        \grd@target
        \pgfmathsetmacro{\grd@xb}{\the\pgf@x/1cm}
        \pgfmathsetmacro{\grd@yb}{\the\pgf@y/1cm}
        \pgfmathsetmacro{\grd@xc}{\grd@xa + \pgfkeysvalueof{/tikz/grid with coordinates/major step}}
        \pgfmathsetmacro{\grd@yc}{\grd@ya + \pgfkeysvalueof{/tikz/grid with coordinates/major step}}
        \foreach \x in {\grd@xa,\grd@xc,...,\grd@xb}
        \node[anchor=north] at (\x,\grd@ya) {\pgfmathprintnumber{\x}};
        \foreach \y in {\grd@ya,\grd@yc,...,\grd@yb}
        \node[anchor=east] at (\grd@xa,\y) {\pgfmathprintnumber{\y}};
      }
    }
  },
  minor help lines/.style={
    help lines,
    step=\pgfkeysvalueof{/tikz/grid with coordinates/minor step}
  },
  major help lines/.style={
    help lines,
    line width=\pgfkeysvalueof{/tikz/grid with coordinates/major line width},
    step=\pgfkeysvalueof{/tikz/grid with coordinates/major step}
  },
  grid with coordinates/.cd,
  minor step/.initial=.2,
  major step/.initial=1,
  major line width/.initial=2pt,
}
\def\problem@s{}
\newcounter{problems@cnt}
\newcommand{\allproblems}{\problem@s}
\definecolor{colortwo}{rgb}{0.4,0.77,0.17}
\definecolor{colorthree}{rgb}{0.01,0.51,0.93}
\newcommand*\samethanks[1][\value{footnote}]{\footnotemark[#1]}
\begin{document}
\title{Entanglement cost of discriminating quantum states under locality constraints}

\author[1]{Chenghong Zhu\thanks{Chenghong Zhu and Chengkai Zhu contributed equally to this work.}}
\affil[1]{\small Thrust of Artificial Intelligence, Information Hub,\par The Hong Kong University of Science and Technology (Guangzhou), Guangdong 511453, China}

\author[1]{Chengkai Zhu\samethanks[1]}

\author[1,2]{Zhiping Liu}

\author[1]{Xin Wang\thanks{felixxinwang@hkust-gz.edu.cn}}

\affil[2]{National Laboratory of Solid State Microstructures, School of Physics and Collaborative Innovation Center of Advanced Microstructures, Nanjing University, Nanjing 210093, China}

\date{\today}
\maketitle

\begin{abstract}
The unique features of entanglement and non-locality in quantum systems, where there are pairs of bipartite states perfectly distinguishable by general entangled measurements yet indistinguishable by local operations and classical communication, hold significant importance in quantum entanglement theory, distributed quantum information processing, and quantum data hiding. This paper delves into the entanglement cost for discriminating two bipartite quantum states, employing positive operator-valued measures (POVMs) with positive partial transpose (PPT) to achieve optimal success probability through general entangled measurements. We first introduce two quantities called the spectral PPT-distance and relative spectral PPT-distance of a POVM to quantify the localness of a general measurement. We show these quantities are related to the entanglement cost of optimal discrimination by PPT POVMs.
Following this, we establish bounds and develop SDP hierarchies to estimate the entanglement cost of optimal discrimination by PPT POVMs for any pair of states. Leveraging these results, we show that a pure state can be optimally discriminated against any other state with the assistance of a single Bell state. This study advances our understanding of the pivotal role played by entanglement in quantum state discrimination, serving as a crucial element in unlocking quantum data hiding against locally constrained measurements.
\end{abstract}

\tableofcontents

\section{Introduction}
In the realm of quantum information, nonlocality manifests when local measurements on a multipartite quantum system may not be able to reveal in which state the system is prepared, even among mutually orthogonal product states~\cite{Bennett1999b}. It has consistently been a crucial research direction to understand the power and limits of quantum operations that can be implemented by local operations and classical communication (LOCC), e.g., in quantum state discrimination~\cite{Leung_2021,Bandyopadhyay2015,Childs2013,Bandyopadhyay2011a,Calsamiglia2010,Halder2019,Walgate2000,Bennett1999b, chitambar2013local, chitambar2013revisiting}.

In quantum state discrimination tasks, one usually performs a binary-valued positive-operator valued measure (POVM) on the received state and then decides which state it is according to the measurement outcome. Quantum state discrimination has led to fruitful applications in quantum cryptography~\cite{Gisin_2002, Cleve_1999, Leverrier_2009}, quantum dimension witness~\cite{Brunner_2013, Hendrych_2012} and quantum data hiding~\cite{Terhal2001,Eggeling2002,Matthews_2009}. 
To understand the underlying mechanism of quantum state discrimination, various operations that encompass LOCC are further considered due to its complex structure, including POVMs that are separable (SEP)~\cite{Bandyopadhyay2015LimSep, chitambar2009nonlocal}, and POVMs with positive partial transpose (PPT)~\cite{Yu2014,li2017indistinguishability, cheng2023discrimination}. There is a strict inclusion among them as~\cite{Bennett1999b},
\begin{equation}\label{Eq:povm_hire}
 \text{LOCC} \subseteq \textrm{SEP} \subseteq \textrm{PPT} \subseteq \textrm{ALL}.
\end{equation}
Notably, PPT POVMs enjoy a much simpler mathematical structure than LOCC and SEP as the former can be completely characterized by semidefinite programming (SDP)~\cite{boyd2004convex}. 
In particular, the distinguishability of quantum states under a restricted family of measurements is closely related to quantum data hiding~\cite{Takagi_2019,Lami_2018}, which is originally based on the existence of pairs of bipartite states that are perfectly distinguishable with general entangled measurements yet indistinguishable by LOCC. In the context of bipartite state discrimination, the utility of a shared entangled state is paramount as it can significantly improve limited local distinguishability, potentially matching the efficacy of general entangled measurements. Notably, a maximally entangled state is often a universal resource for such enhancements~\cite{Cohen2008, Bandyopadhyay2015}. This leads to a compelling and crucial question:
\begin{center}
    \emph{How much entanglement at minimum do we need to unlock this quantum data hiding?}
\end{center}

Due to the complex structure of LOCC and entanglement, our understanding of the role of entanglement in quantum state discrimination is still limited.
Currently, the entanglement cost is only known for a small number of ensembles of quantum states, with a focus on the set that contains at least one of the four Bell states. Specifically, it has been shown that Bell basis~\cite{Ghosh2001Bellbasis} and a set of three Bell states~\cite{Bandyopadhyay2015LimSep} can be distinguished with the assistance of 1 ebit.
For quantum states with higher dimensions, Ref.~\cite{Yu2014} shows that 1 ebit suffices for discriminating a $d\ox d$ pure state and its orthogonal complement perfectly by PPT POVMs.
For other special cases, Ref.~\cite{Bandyopadhyay_2021} demonstrates that 1 ebit is required for the optimal discrimination for a noisy Bell state ensemble by LOCC.

In this paper, we study the entanglement cost of optimally discriminating two arbitrary quantum states by PPT POVMs, which is also the lower bound for the scenario of employing LOCC. {Firstly, we introduce the spectral PPT-distance and the relative spectral PPT-distance of a POVM which characterize the PPT-ness of a general POVM.} Secondly, we formally define the entanglement cost of optimal discriminating a set of states via measurements under locality constraints. {We show that the spectral PPT-distance and the relative spectral PPT-distance of a POVM can be utilized to characterize the entanglement cost of optimal discrimination by PPT POVMs. Thirdly, we provide an upper bound on the entanglement cost of optimal discrimination by PPT POVMs which is directly determined by the spectrum properties of the given states. Leveraging this upper bound, we introduce an SDP hierarchy to estimate the entanglement cost of optimal PPT-discrimination through iterative steps. As examples,} we show that a pure state can always be optimally discriminated against any other states with the assistance of 1 ebit, and provide a tighter estimation of the entanglement required compared with teleportation when discriminating orthogonal mixed states.

\begin{figure}[t]
    \centering
    \includegraphics[width=0.6\linewidth]{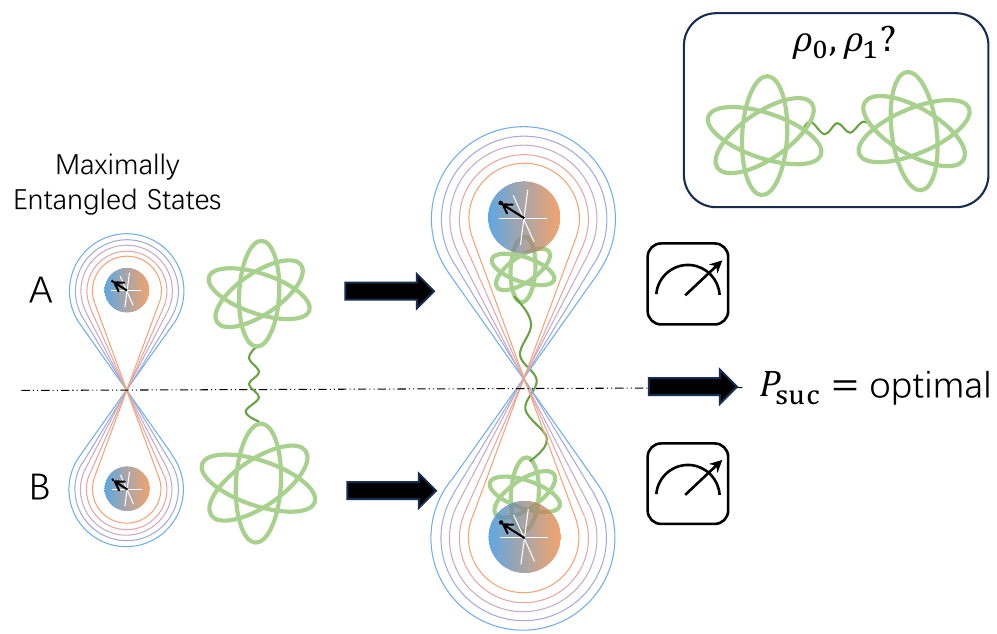}
    \caption{Illustration of the entanglement-assisted quantum state discrimination task. Alice and Bob (represented as $A$ and $B$) aim to discriminate bipartite quantum states $\rho_0$ and $\rho_1$. Commencing with an initial allocation of entanglement resources, the optimal discrimination achievable through a general entangled measurement is demonstrated to be attainable by employing measurements under locality constraints, showcasing the pivotal role of entanglement consumption in the discrimination process.}
    \label{fig:enter-label}
\end{figure}

\section{Preliminaries}

\paragraph{Notations}
We denote the finite-dimensional Hilbert spaces associated with the systems of Alice and Bob as $\cH_A$ and $\cH_B$, respectively. The dimensions of $\cH_A$ and $\cH_B$ are denoted as $d_A$ and $d_B$. A quantum state on system $A$ is a positive semidefinite operator $\rho_A$ with trace one. {We denote by $\cL(\cH_A)$ the set of all linear operators on $\cH_A$ and denote by $\cD(\cH_A)$ the set of all quantum states (density matrices) on $\cH_A$.} Let $\{\ket{j}\}_{j = 0,\cdots,d-1}$ be a standard computational basis, then a standard maximally entangled state of Schmidt rank $d$ is $\Phi_d^+ = 1/d \sum_{i,j = 0}^{d-1}\ketbra{ii}{jj}$. A bipartite quantum state $\rho_{AB}$ is said to be Positive-Partial-Transpose (PPT) if $\rho_{AB}^{T_B} \geq 0$ where $T_B$ denotes taking partial transpose on the subsystem $B$.

\paragraph{Quantum State Discrimination}
Quantum measurements are described by positive-operator-valued-measures (POVMs). A POVM with $n$ elements is denoted as $\boldsymbol{E} = \{E_j\}_{j=0}^{n-1}$ where $E_j \geq 0$ and $\sum_j E_j = I$. Suppose $\Omega = \{(p_j, \rho_j)\}_{j=0}^{n-1}$ is an ensemble of $n$ quantum states $\rho_0 \dots \rho_{n-1}$ given with associated probability $p_0 \dots p_{n-1}$. In the task of quantum state discrimination, we usually apply a POVM on a given unknown state $\rho_j$ with prior probability $p_j$ and decide which state it is according to the measurement outcomes. In the context of minimum-error state discrimination (MSD), the goal is to maximize the average success probability, given by
\begin{equation}\label{Eq:suc_prob}
    P_{\suc}(\Omega) = \max_{\{E_j\}}\sum_{j=0}^{n-1} p_j \tr(\rho_j E_j).
\end{equation}
Notably, the celebrated result in Ref.~\cite{helstrom1969quantum} shows that optimal discrimination success probability of discriminating two quantum states $\{\rho_0, \rho_1\}$ with prior probability $p_0$ and $p_1$ is given by $\frac{1}{2}(1+ \|p_0\rho_0 - p_1\rho_1 \|_1)$, 
where $\| \cdot \|_1$ denotes the trace norm. The corresponding POVM is called the \textit{Helstrom measurements}. It can be expressed by $M_+$ and $M_{-}$, where $M_+$ is the projection on the positive eigenspace of $p_0\rho_0 - p_1\rho_1$ and $M_- = I-M_+$.

Considering POVMs under locality constraints acting on a bipartite system $AB$, we denote {$\mathbf{\LOCC}_{AB}$} as the set of all LOCC POVMs~\cite{Chitambar_2014} which includes strategies based on measurement outcomes performed locally on each subsystem. We denote
\begin{equation}
    \boldsymbol{\rm SEP}_{AB}:=\left\{ \left\{E_k^{(1)}\ox E_k^{(2)}\right\}_k: E_{k}^{(1)}, E_k^{(2)}\in \cL(\cH_A\ox\cH_B),~E_{k}^{(1)}, E_k^{(2)}\geq 0,~\sum_{k}E_k^{(1)}\ox E_k^{(2)} = I\right\}
\end{equation}
as the set of all SEP POVMs~\cite{Matthews2009} and
\begin{equation}
    \boldsymbol{\rm PPT}_{AB}:=\left\{ \{E_{k}\}_{k}: E_{k}\in \cL(\cH_A\ox\cH_B),~E_{k}\geq 0,\, E_{k}^{T_B}\geq 0,\, \sum_k E_{k} = I \right\}
\end{equation}
as the set of all PPT POVMs~\cite{Matthews2009}.

\section{Spectral PPT-distance of a POVM}
To investigate the capability of measurements under locality constraints, we first focus on the characterization of the localness of a POVM. In light of the favorable mathematical characterization of PPT POVMs, we introduce the generalized PPT-distance of a POVM as follows.
\begin{definition}[Generalized PPT-distance]
Given a POVM $\boldsymbol{M} = \{M_j\}_{j=0}^{n-1}$ acting on a bipartite system $\cH_A\ox\cH_B$, its generalized PPT-distance is defined as
\begin{equation}
\mathscr{D}(\boldsymbol{M}) = \min\bigg\{r\,\bigg|\,\Delta(M_j^{T_B}\|E_j^{T_B})\leq r,\, {\{E_j\}\in \boldsymbol{\rm PPT}_{AB}}\bigg\},
\end{equation}
where {$\Delta(\cdot \| \cdot)$ is a quantity that satisfies positive-definiteness, i.e., $\Delta(M_j^{T_B} \|E_j^{T_B}) \geq 0$ and is equal to zero if and only if $M_j^{T_B} = E_j^{T_B}$}. The minimization ranges over all PPT POVMs.
\end{definition}
{The generalized PPT-distance could be understood as a quantity that characterizes the disparity of a POVM to the set $\boldsymbol{\rm PPT}$, as a reminiscent of the relative entropy of PPT entanglement~\cite{AudenaertREPPT,Hayashi2006a,Plenio2007}. In below, we show different choices for $\Delta(\cdot\|\cdot)$ and their associated properties.}

\subsection{Spectral PPT-distance}
An essential choice of $\Delta(\cdot\|\cdot)$ is the distance induced by the Schatten $p$-norms, i.e., 
\begin{equation}
\left\|X\right\|_p = \left(\tr\left[(X^\dag X)^{\frac{p}{2}}\right]\right)^{\frac{1}{p}},\, p\in[1,\infty).
\end{equation}
This family includes the three most commonly used distances in quantum information theory, i.e., the trace distance, the Frobenius distance, and the spectral distance. The Schatten-$p$ PPT-distance is denoted as $\mathscr{D}_{p}(\boldsymbol{M})$ for $\Delta(M_j^{T_B} \| E_j^{T_B}):= \|M_j^{T_B}-E_j^{T_B}\|_p$. Its geometric interpretation is depicted in Fig.~\ref{fig:Ent_cost_dis}. For $p=\infty$ where $\|\cdot\|_{\infty}$ corresponds to a fundamentally important norm of the spectral norm, we have \textit{the spectral PPT-distance of a POVM} as follows.
\begin{definition}[Spectral PPT-distance]
Given a POVM $\boldsymbol{M}=\{M_j\}_{j=0}^{n-1}$ acting on a bipartite system $\cH_A\ox\cH_B$, its spectral PPT-distance is defined as
\begin{equation}
\mathscr{D}_{\infty}(\boldsymbol{M}) = \min\bigg\{\, r\,\bigg|\,\left\|M_j^{T_B}-E_j^{T_B}\right\|_{\infty}\leq r,\, {\{E_j\}\in \boldsymbol{\rm PPT}_{AB}} \bigg\},
\end{equation}
where the minimization ranges over all possible $\{E_j\}_j$.
\end{definition}

The spectral PPT-distance of a POVM bears nice properties. Firstly, it is efficiently computable via SDP and quantifies the proximity of a given POVM to the set of PPT POVMs. Specifically, $\mathscr{D}_{\infty}(\boldsymbol{M})$ is faithful, i.e., $\mathscr{D}_{\infty}(\boldsymbol{M})\geq 0$ and $\mathscr{D}_{\infty}(\boldsymbol{M})=0$ if and only if $\boldsymbol{M}$ is a PPT POVM. Secondly, it is invariant under local unitaries which is directly obtained by the local unitary invariance of the spectral norm and PPT-ness of $E_j$. Thirdly, for two POVMs $\boldsymbol{M}_0 = \{M_{0,j}\}_{j=0}^{n-1}$ and $\boldsymbol{M}_1 = \{M_{1,j}\}_{j=0}^{n-1}$, we denote a convex combination of them as $\widehat{\boldsymbol{M}} = \alpha\boldsymbol{M}_0 + (1-\alpha)\boldsymbol{M}_1:= \{\widehat{M_j}\}_{j=0}^{n-1}$ where $\widehat{M_j} = \alpha M_{0,j} + (1-\alpha) M_{1,j}$; denote the tensor product of them as $\widetilde{\boldsymbol{M}}=\boldsymbol{M}_0\ox\boldsymbol{M}_1:=\{\widetilde{M}_{jk}\}_{j,k=0}^{n-1}$ where $\widetilde{M}_{jk} = M_{0,j}\ox M_{1,k}$. We show that $\mathscr{D}_{\infty}(\boldsymbol{M})$ is convex w.r.t. the convex combination and linearly subadditive w.r.t. the tensor product of POVMs.

\begin{proposition}[Convexity]\label{prop:cov}
For POVMs $\boldsymbol{M}_0, \boldsymbol{M}_1$ acting on a bipartite system $\cH_A\ox\cH_B$ and $0\leq \alpha \leq 1$,
\begin{equation}
    \mathscr{D}_\infty(\alpha\boldsymbol{M}_0 + (1-\alpha)\boldsymbol{M}_1) \leq \alpha \cdot\mathscr{D}_\infty(\boldsymbol{M}_0) + (1-\alpha)\cdot \mathscr{D}_\infty(\boldsymbol{M}_1).
\end{equation}
\end{proposition}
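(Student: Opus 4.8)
The plan is to build on the two standard ingredients behind convexity of a distance-to-a-convex-set functional: the convexity of the PPT POVM set $\boldsymbol{\rm PPT}_{AB}$, used to produce a feasible point, and the triangle inequality for $\|\cdot\|_\infty$, used to control the objective. First I would fix witnesses for the two individual distances. Since the feasible region is closed, the minima are attained, so I may take $\{E_{0,j}\}_j\in\boldsymbol{\rm PPT}_{AB}$ with $\|M_{0,j}^{T_B}-E_{0,j}^{T_B}\|_\infty\le\mathscr{D}_\infty(\boldsymbol{M}_0)$ for every $j$, and likewise $\{E_{1,j}\}_j\in\boldsymbol{\rm PPT}_{AB}$ with $\|M_{1,j}^{T_B}-E_{1,j}^{T_B}\|_\infty\le\mathscr{D}_\infty(\boldsymbol{M}_1)$ for every $j$.

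Next I would form the natural candidate $\widehat{E_j}:=\alpha E_{0,j}+(1-\alpha)E_{1,j}$ and check that it is feasible for the program defining $\mathscr{D}_\infty(\widehat{\boldsymbol{M}})$. Positivity $\widehat{E_j}\ge 0$ is immediate from convexity of the positive semidefinite cone. The PPT condition follows from linearity of the partial transpose, $\widehat{E_j}^{T_B}=\alpha E_{0,j}^{T_B}+(1-\alpha)E_{1,j}^{T_B}\ge 0$, a convex combination of two positive semidefinite operators. The completeness relation holds because $\sum_j\widehat{E_j}=\alpha\sum_j E_{0,j}+(1-\alpha)\sum_j E_{1,j}=\alpha I+(1-\alpha)I=I$. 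Hence $\{\widehat{E_j}\}_j\in\boldsymbol{\rm PPT}_{AB}$.

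Finally I would bound the objective at this candidate. Using linearity of $T_B$ and the definition $\widehat{M_j}=\alpha M_{0,j}+(1-\alpha)M_{1,j}$, one has
\[
\widehat{M_j}^{T_B}-\widehat{E_j}^{T_B}=\alpha\big(M_{0,j}^{T_B}-E_{0,j}^{T_B}\big)+(1-\alpha)\big(M_{1,j}^{T_B}-E_{1,j}^{T_B}\big),
\]
so the triangle inequality for the spectral norm gives $\|\widehat{M_j}^{T_B}-\widehat{E_j}^{T_B}\|_\infty\le\alpha\,\mathscr{D}_\infty(\boldsymbol{M}_0)+(1-\alpha)\,\mathscr{D}_\infty(\boldsymbol{M}_1)$ for each $j$. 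Since $\{\widehat{E_j}\}_j$ is a feasible PPT POVM whose witnessing radius is at most this value, the minimization defining $\mathscr{D}_\infty(\widehat{\boldsymbol{M}})$ yields exactly the claimed inequality. I do not expect any real obstacle here: this is the standard ``a convex combination of optimizers is feasible'' argument, and the only point that warrants a moment's attention is that the PPT constraint survives convex mixing, which reduces to linearity of $T_B$ together with convexity of the positive semidefinite cone.
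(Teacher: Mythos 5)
Your proof is correct and follows essentially the same route as the paper's: take optimal PPT POVMs witnessing $\mathscr{D}_\infty(\boldsymbol{M}_0)$ and $\mathscr{D}_\infty(\boldsymbol{M}_1)$, observe their convex combination is again a feasible PPT POVM, and apply the triangle inequality for the spectral norm. You are in fact slightly more explicit than the paper in verifying the PPT and completeness conditions of the mixed POVM, but the argument is the same.
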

\begin{proof}
Denote the optimal POVMs for $\mathscr{D}_{\infty}(\boldsymbol{M}_0)$ and $\mathscr{D}_{\infty}(\boldsymbol{M}_1)$ are $\{S_{j}\}_j$ and $\{T_j\}_j$, respectively. $\{\alpha S_j + (1-\alpha)T_j\}_j$ is a valid POVM as well. Then we have
\begin{equation}
    \mathscr{D}_{\infty}(\boldsymbol{M}) \leq \max_{j} \;\|\alpha M_{0,j}^{T_B}+ (1-\alpha) M_{1,j}^{T_B}- \alpha S_j^{T_B} - (1-\alpha) T_j^{T_B}\|_{\infty}.
\end{equation}
Notice that 
\begin{equation}
\begin{aligned}
    \|\alpha M_{0,j}^{T_B}+ (1-\alpha) M_{1,j}^{T_B}- \alpha S_j^{T_B} - (1-\alpha) T_j^{T_B}\|_{\infty}
    \leq & \|\alpha (M_{0,j}^{T_B} - S_j^{T_B}) + (1-\alpha) (M_{1,j}^{T_B} - T_j^{T_B})\|_{\infty}\\
    \leq & \alpha \| M_{0,j}^{T_B} - S_j^{T_B}\|_{\infty} + (1-\alpha)\|M_{1,j}^{T_B} - T_j^{T_B}\|_{\infty}\\
    \leq & \alpha\mathscr{D}_\infty(\boldsymbol{M}_0) + (1-\alpha) \mathscr{D}_\infty(\boldsymbol{M}_1).
\end{aligned}
\end{equation}
Hence, we complete the proof.
\end{proof}

\begin{proposition}[{Linear subadditivity}]\label{prop:lin_subadd}
For POVMs $\boldsymbol{M}_0=\{M_{0,j}\}_j,\; \boldsymbol{M}_1=\{M_{1,k}\}_k$ acting on a bipartite system $\cH_A\ox\cH_B$,
\begin{equation}
\mathscr{D}_{\infty}(\boldsymbol{M}_0\ox\boldsymbol{M}_1) \leq \xi \cdot \mathscr{D}_{\infty}(\boldsymbol{M}_0) + \zeta \cdot \mathscr{D}_{\infty}(\boldsymbol{M}_1),
\end{equation}
where $\xi = \max_k (1+\|M_{1,k}^{T_B}\|_{\infty})/2$ and $\zeta = \max_j (1+\|M_{0,j}^{T_B}\|_{\infty})/2$.
\end{proposition}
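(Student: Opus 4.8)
The plan is to use the optimal PPT POVMs for the two factors to build a feasible PPT POVM for the tensor product, and then control the resulting spectral norms through a \emph{symmetric} telescoping identity. First I would let $\{S_j\}_j$ and $\{T_k\}_k$ be PPT POVMs attaining $\mathscr{D}_\infty(\boldsymbol{M}_0)$ and $\mathscr{D}_\infty(\boldsymbol{M}_1)$, so that $\|M_{0,j}^{T_B}-S_j^{T_B}\|_\infty \leq \mathscr{D}_\infty(\boldsymbol{M}_0)$ and $\|M_{1,k}^{T_B}-T_k^{T_B}\|_\infty \leq \mathscr{D}_\infty(\boldsymbol{M}_1)$ for all $j,k$. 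The product family $\{S_j\ox T_k\}_{j,k}$ is again a PPT POVM: it is positive, sums to $(\sum_j S_j)\ox(\sum_k T_k)=I$, and its partial transpose $(S_j\ox T_k)^{T_B}=S_j^{T_B}\ox T_k^{T_B}$ is positive because a tensor product of positive operators is positive. Hence it is feasible for the minimization defining $\mathscr{D}_\infty(\boldsymbol{M}_0\ox\boldsymbol{M}_1)$, yielding
\begin{equation}
\mathscr{D}_\infty(\boldsymbol{M}_0\ox\boldsymbol{M}_1)\leq \max_{j,k}\left\|M_{0,j}^{T_B}\ox M_{1,k}^{T_B}-S_j^{T_B}\ox T_k^{T_B}\right\|_\infty.
\end{equation}

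The core step is the symmetric telescoping identity
\begin{equation}
X_1\ox Y_1-X_2\ox Y_2=\frac{1}{2}(X_1-X_2)\ox(Y_1+Y_2)+\frac{1}{2}(X_1+X_2)\ox(Y_1-Y_2),
\end{equation}
applied with $X_1=M_{0,j}^{T_B}$, $X_2=S_j^{T_B}$, $Y_1=M_{1,k}^{T_B}$, $Y_2=T_k^{T_B}$. Using the triangle inequality together with the multiplicativity $\|X\ox Y\|_\infty=\|X\|_\infty\|Y\|_\infty$, each summand is bounded so that the product norm is at most $\frac{1}{2}\|M_{0,j}^{T_B}-S_j^{T_B}\|_\infty\,\|M_{1,k}^{T_B}+T_k^{T_B}\|_\infty+\frac{1}{2}\|M_{0,j}^{T_B}+S_j^{T_B}\|_\infty\,\|M_{1,k}^{T_B}-T_k^{T_B}\|_\infty$.

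The step I expect to do the real work is bounding the ``sum'' factors $\|M_{1,k}^{T_B}+T_k^{T_B}\|_\infty$ and $\|M_{0,j}^{T_B}+S_j^{T_B}\|_\infty$. Here I would use that $\{S_j\}$ and $\{T_k\}$ being PPT POVMs forces $\sum_j S_j^{T_B}=\sum_k T_k^{T_B}=I$ with all summands positive, whence $S_j^{T_B}\leq I$ and $T_k^{T_B}\leq I$, so $\|S_j^{T_B}\|_\infty\leq 1$ and $\|T_k^{T_B}\|_\infty\leq 1$. Then the triangle inequality gives $\|M_{1,k}^{T_B}+T_k^{T_B}\|_\infty\leq 1+\|M_{1,k}^{T_B}\|_\infty$ and likewise for the other factor. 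Substituting the distance bounds for $\{S_j\},\{T_k\}$ and separating the maximum over $j,k$ (the first contribution depends only on $k$ and the second only on $j$, so $\max_{j,k}$ of the sum equals the sum of the two separate maxima) produces exactly $\xi\cdot\mathscr{D}_\infty(\boldsymbol{M}_0)+\zeta\cdot\mathscr{D}_\infty(\boldsymbol{M}_1)$ with $\xi=\max_k(1+\|M_{1,k}^{T_B}\|_\infty)/2$ and $\zeta=\max_j(1+\|M_{0,j}^{T_B}\|_\infty)/2$. The only genuine subtlety is choosing the symmetric rather than one-sided telescoping: this is precisely what yields the factor $1/2$ and the symmetric $1+\|\cdot\|_\infty$ terms in $\xi$ and $\zeta$.
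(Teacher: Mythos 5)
Your proof is correct and follows essentially the same route as the paper: both use the product $\{S_j\ox T_k\}$ of the optimal PPT POVMs as a feasible point, telescope the difference of tensor products, apply multiplicativity of the spectral norm, and bound $\|S_j^{T_B}\|_\infty,\|T_k^{T_B}\|_\infty\leq 1$ from the PPT POVM constraint. The only cosmetic difference is that you invoke the symmetric telescoping identity directly, whereas the paper derives the two one-sided telescoping bounds and averages them --- algebraically the same thing.
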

\begin{proof}
Denote the optimal POVMs for $\mathscr{D}_{\infty}(\boldsymbol{M}_0)$ and $\mathscr{D}_{\infty}(\boldsymbol{M}_1)$ are $\{S_{j}\}_j$ and $\{T_j\}_j$, respectively. Then we have
\begin{equation}
\mathscr{D}_{\infty}(\boldsymbol{M}) \leq \max_{j,k} \;\|M_{0,j}^{T_B}\ox M_{1,k}^{T_{B'}} - S_j^{T_{B}}\ox T_k^{T_{B'}}\|_{\infty}.
\end{equation}
Notice that 
\begin{equation}\label{Eq:add_ineq0}
\begin{aligned}
    \|M_{0,j}^{T_B}\ox M_{1,k}^{T_{B'}} - S_j^{T_{B}}\ox T_k^{T_{B'}}\|_{\infty}
    \leq & \|M_{0,j}^{T_B}\ox (M_{1,k}^{T_{B'}} - T_k^{T_{B'}})\|_{\infty}+ \|(M_{0,j}^{T_B}- S_j^{T_{B}})\ox T_k^{T_{B'}}\|_{\infty}\\
    {=} & \|M_{0,j}^{T_B}\|_{\infty}\|M_{1,k}^{T_{B'}} - T_k^{T_{B'}}\|_{\infty} + \|M_{0,j}^{T_B}- S_j^{T_{B}}\|_{\infty}\|T_k^{T_{B'}}\|_{\infty}\\
    \leq & \|M_{0,j}^{T_B}\|_{\infty} \mathscr{D}_{\infty}(\boldsymbol{M}_1) + \mathscr{D}_{\infty}(\boldsymbol{M}_0)
\end{aligned}
\end{equation}
Similarly, we have
\begin{equation}\label{Eq:add_ineq1}
    \|M_{0,j}^{T_B}\ox M_{1,k}^{T_{B'}} - S_j^{T_{B}}\ox T_k^{T_{B'}}\|_{\infty} \leq \|M_{1,k}^{T_B}\|_{\infty} \mathscr{D}_{\infty}(\boldsymbol{M}_0) + \mathscr{D}_{\infty}(\boldsymbol{M}_1).
\end{equation}
Combining Eq.~\eqref{Eq:add_ineq0} and Eq.~\eqref{Eq:add_ineq1}, we have
\begin{equation}
\begin{aligned}
    \|M_{0,j}^{T_B}\ox M_{1,k}^{T_{B'}} - S_j^{T_{B}}\ox T_k^{T_{B'}}\|_{\infty}
    \leq& \frac{1+\|M_{1,k}^{T_B}\|_{\infty}}{2} \mathscr{D}_{\infty}(\boldsymbol{M}_0) + \frac{1+\|M_{0,j}^{T_B}\|_{\infty}}{2}\mathscr{D}_{\infty}(\boldsymbol{M}_1).
\end{aligned}
\end{equation}
Therefore, we have
\begin{equation}
 \mathscr{D}_{\infty}(\boldsymbol{M}_0\ox\boldsymbol{M}_1) \leq \xi \cdot \mathscr{D}_{\infty}(\boldsymbol{M}_0) + \zeta \cdot \mathscr{D}_{\infty}(\boldsymbol{M}_1),
\end{equation}
where $\xi = \max_k (1+\|M_{1,k}^{T_B}\|_{\infty})/2$ and $\zeta = \max_j (1+\|M_{0,j}^{T_B}\|_{\infty})/2$.
\end{proof}

\subsection{Relative spectral PPT-distance}
In the following, we introduce the relative spectral PPT-distance of a POVM by considering $\Delta(M_j^{T_B}\|E_j^{T_B}) := \min \{r~| -rE_j^{T_B} \leq M_j^{T_B}-E_j^{T_B}\leq rE_j^{T_B} \}$, where $E_j^{T_B} \geq 0$ as $\{E_j\}_j \in \boldsymbol{\rm PPT}$. It can be checked that $\Delta(M_j^{T_B}\|E_j^{T_B}) = 0$ if and only if $M_j^{T_B}  = E_j^{T_B}$.
\begin{definition}[Relative spectral PPT-distance]
Given a POVM $\boldsymbol{M}=\{M_j\}_{j=0}^{n-1}$ acting on a bipartite system $\cH_A\ox\cH_B$, its relative spectral PPT-distance is defined as
\begin{equation}\label{eq:relative_spectral_ppt}
\mathscr{D}^{\cR}_{\infty}(\boldsymbol{M}) = \min\bigg\{\, r\,\bigg|\, -r E_j^{T_B} \leq M_j^{T_B} - E_j^{T_B} \leq r E_j^{T_B},\, \{E_j\}\in \boldsymbol{\rm PPT}_{AB}\bigg\},
\end{equation}
where the minimization ranges over all PPT POVMs.
\end{definition}
It is faithful as it can be checked that $\mathscr{D}^{\cR}_{\infty}(\boldsymbol{M})$ = 0 if and only if $\boldsymbol{M}$ is a PPT POVM, i.e. $\mathscr{D}^{\cR}_{\infty}(\boldsymbol{M}) = 0$ $\iff$ $\boldsymbol{M} \in \boldsymbol{\rm PPT}_{AB}$. The relative spectral PPT-distance also satisfies additional properties as shown below.
\begin{proposition}[Local unitary invariant]
    Given a POVM $\boldsymbol{M}=\{M_j\}_{j=0}^{n-1}$ acting on a bipartite system $\cH_A\ox\cH_B$, the relative spectral PPT-distance satisfies,
\begin{equation}
    \mathscr{D}^{\cR}_{\infty}((U_A\ox U_B)\boldsymbol{M}(U_A\ox U_B)^\dagger) = \mathscr{D}^{\cR}_{\infty}(\boldsymbol{M}),
\end{equation}
where $(U_A\ox U_B)\boldsymbol{M}(U_A\ox U_B)^\dagger$ denotes $\{(U_A\ox U_B)M_j(U_A\ox U_B)^\dagger\}_{j=0}^{n-1}$.
\end{proposition}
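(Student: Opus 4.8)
The plan is to exploit how the partial transpose interacts with conjugation by a local unitary, and then to transport any feasible PPT POVM for $\boldsymbol{M}$ into a feasible one for the rotated POVM at the same value of $r$. The whole statement will then follow from a symmetry argument.

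First I would establish the key algebraic identity: for any operator $X$ on $\cH_A\ox\cH_B$,
\[
\big((U_A\ox U_B)\,X\,(U_A\ox U_B)^\dagger\big)^{T_B} = (U_A\ox \overline{U_B})\,X^{T_B}\,(U_A\ox \overline{U_B})^\dagger,
\]
where $\overline{U_B}$ denotes the entrywise complex conjugate of $U_B$. This follows by writing $X=\sum_i A_i\ox B_i$ and using $(U_B B_i U_B^\dagger)^T = \overline{U_B}\,B_i^T\,U_B^T$ together with the relations $(U_B^\dagger)^T=\overline{U_B}$ and $U_B^T=\overline{U_B}^\dagger$. Setting $V:=U_A\ox\overline{U_B}$, which is again unitary since $U_A,U_B$ are, the identity reads $(M_j')^{T_B}=V\,M_j^{T_B}\,V^\dagger$, where $M_j':=(U_A\ox U_B)M_j(U_A\ox U_B)^\dagger$.

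Next I would prove one direction, $\mathscr{D}^{\cR}_{\infty}(\boldsymbol{M}')\leq\mathscr{D}^{\cR}_{\infty}(\boldsymbol{M})$. Let $\{E_j\}\in\boldsymbol{\rm PPT}_{AB}$ be optimal for $\mathscr{D}^{\cR}_{\infty}(\boldsymbol{M})$ with value $r$, and set $E_j':=(U_A\ox U_B)E_j(U_A\ox U_B)^\dagger$. Then $\{E_j'\}$ is a valid POVM, since conjugation by a unitary preserves positivity and $\sum_j E_j'=I$, and it is PPT because $(E_j')^{T_B}=V\,E_j^{T_B}\,V^\dagger\geq 0$. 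Conjugating the defining inequalities $-rE_j^{T_B}\leq M_j^{T_B}-E_j^{T_B}\leq rE_j^{T_B}$ by $V$ — which preserves the Loewner order since $V$ is invertible — and applying the identity above yields
\[
-r\,(E_j')^{T_B}\leq (M_j')^{T_B}-(E_j')^{T_B}\leq r\,(E_j')^{T_B}.
\]
Hence $\{E_j'\}$ is feasible for $\boldsymbol{M}'$ at the same value $r$, which gives the claimed inequality.

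Finally, applying the identical argument with the local unitaries $U_A^\dagger,U_B^\dagger$ to $\boldsymbol{M}'$ recovers $\boldsymbol{M}$ and yields the reverse inequality, so equality holds. I expect no serious obstacle: the only real content is the partial-transpose identity, and the one subtlety is bookkeeping the complex conjugate $\overline{U_B}$ that appears after the partial transpose; everything else reduces to the standard facts that conjugation by the unitary $V$ simultaneously preserves PPT-ness and the two-sided Loewner inequalities.
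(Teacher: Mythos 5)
Your proposal is correct and follows essentially the same route as the paper: transport the optimal PPT POVM by the local unitary, check feasibility at the same value of $r$, and obtain the reverse inequality by the symmetric argument with $U_A^\dagger, U_B^\dagger$. Your explicit identity $\bigl((U_A\ox U_B)X(U_A\ox U_B)^\dagger\bigr)^{T_B}=(U_A\ox \overline{U_B})X^{T_B}(U_A\ox \overline{U_B})^\dagger$ merely spells out the step the paper labels ``easy to see.''
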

\begin{proof}
Let $\mathscr{D}^{\cR}_{\infty}(\boldsymbol{M}) = r_0$ with an optimal PPT POVM $\{E_j\}$. For any unitary operation $U_A\ox U_B$, denote $\boldsymbol{M'} = \{M_j'\}$, where each element $M_j'$ is given by $M_j' := (U_A\ox U_B)M_j(U_A\ox U_B)^\dagger)$. 
It is easy to see that $\{E_j':= (U_A\ox U_B) E_j(U_A\ox U_B)^\dag \}_j$ is a PPT POVM such that
\begin{equation}
    -r_0 E_j'^{T_B} \leq M_j'^{T_B} - E_j'^{T_B} \leq r_0 E_j'^{T_B}.
\end{equation}
It follows that $\mathscr{D}^{\cR}_{\infty}(\boldsymbol{M'}) \leq r_0$. Similarly, suppose $\mathscr{D}^{\cR}_{\infty}(\boldsymbol{M'}) = r_1$ with an optimal PPT POVM $\{F_j\}$. We have $\{F_j' := (U_A\ox U_B)^\dagger F_j (U_A\ox U_B)\}$ is a PPT POVM such that \begin{equation}
    -r_1 F_j'^{T_B} \leq M_j^{T_B} - F_j'^{T_B} \leq r_1 F_j'^{T_B},
\end{equation}
which gives $r_0 \le r_1$. Therefore, we have $\mathscr{D}^{\cR}_{\infty}(\boldsymbol{M'}) = \mathscr{D}^{\cR}_{\infty}(\boldsymbol{M})$.
\end{proof}

\begin{proposition}
For POVMs $\boldsymbol{M}_0, \boldsymbol{M}_1$ acting on a bipartite system $\cH_A\ox\cH_B$ and $0\leq \alpha \leq 1$,
\begin{equation}
    \mathscr{D}^{\cR}_{\infty}(\alpha\boldsymbol{M}_0 + (1-\alpha)\boldsymbol{M}_1) \leq \max\{\mathscr{D}^{\cR}_{\infty}(\boldsymbol{M}_0), \mathscr{D}^{\cR}_{\infty}(\boldsymbol{M}_1)\}.
\end{equation}
\end{proposition}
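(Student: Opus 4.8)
The plan is to follow the same witness-construction strategy used in the proof of Proposition~\ref{prop:cov}, but exploiting the fact that the defining inequalities of $\mathscr{D}^{\cR}_{\infty}$ are two-sided operator inequalities of the form $\pm r E_j^{T_B}$ rather than norm bounds. Write $r_0 = \mathscr{D}^{\cR}_{\infty}(\boldsymbol{M}_0)$ and $r_1 = \mathscr{D}^{\cR}_{\infty}(\boldsymbol{M}_1)$, and let $\{S_j\}_j, \{T_j\}_j \in \boldsymbol{\rm PPT}_{AB}$ be optimal PPT POVMs achieving these values, so that $-r_0 S_j^{T_B} \le M_{0,j}^{T_B} - S_j^{T_B} \le r_0 S_j^{T_B}$ and $-r_1 T_j^{T_B} \le M_{1,j}^{T_B} - T_j^{T_B} \le r_1 T_j^{T_B}$ hold for every $j$.

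First I would propose the convex combination $\widehat{E_j} := \alpha S_j + (1-\alpha) T_j$ as a candidate witness for $\widehat{\boldsymbol{M}} = \alpha \boldsymbol{M}_0 + (1-\alpha)\boldsymbol{M}_1$. Since each constraint defining $\boldsymbol{\rm PPT}_{AB}$ (positivity $E_j \ge 0$, PPT-ness $E_j^{T_B} \ge 0$, and the completeness relation $\sum_j E_j = I$) is preserved under convex combinations, $\{\widehat{E_j}\}_j$ is again a PPT POVM. This reduces the proposition to establishing, for $r := \max\{r_0, r_1\}$, the bound $-r\,\widehat{E_j}^{T_B} \le \widehat{M_j}^{T_B} - \widehat{E_j}^{T_B} \le r\,\widehat{E_j}^{T_B}$ for every $j$.

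The computation is then short: since $\widehat{M_j}^{T_B} - \widehat{E_j}^{T_B} = \alpha(M_{0,j}^{T_B} - S_j^{T_B}) + (1-\alpha)(M_{1,j}^{T_B} - T_j^{T_B})$, I would bound each summand by its own optimal constraint and add. The one point requiring care — and the reason the bound is a maximum rather than the convex combination $\alpha r_0 + (1-\alpha) r_1$ one might naively expect — is the step that replaces $r_0$ and $r_1$ by their common upper bound $r$. Here I would use that $S_j^{T_B} \ge 0$ and $T_j^{T_B} \ge 0$ (guaranteed since $\{S_j\}_j, \{T_j\}_j$ are PPT POVMs): monotonicity of the operator order under nonnegative scaling gives $\alpha r_0 S_j^{T_B} \le \alpha r S_j^{T_B}$ and $(1-\alpha) r_1 T_j^{T_B} \le (1-\alpha) r T_j^{T_B}$, so the upper bounds combine into $r(\alpha S_j^{T_B} + (1-\alpha) T_j^{T_B}) = r\,\widehat{E_j}^{T_B}$, with the lower bound following symmetrically. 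A genuine convex-combination bound fails precisely because the two normalizations $S_j^{T_B}$ and $T_j^{T_B}$ are incomparable operators, so one cannot in general absorb $\alpha r_0 S_j^{T_B} + (1-\alpha) r_1 T_j^{T_B}$ into $c\,\widehat{E_j}^{T_B}$ for any $c < \max\{r_0, r_1\}$.

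This yields $\mathscr{D}^{\cR}_{\infty}(\widehat{\boldsymbol{M}}) \le r = \max\{r_0, r_1\}$, completing the argument. I do not anticipate a real obstacle: the whole proof rests on convexity of $\boldsymbol{\rm PPT}_{AB}$ and on the positivity of the partial transposes of the witnesses, and the only thing to get right is the direction of the scaling inequalities, both of which are immediate from the PPT property.
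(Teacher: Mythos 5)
Your proposal is correct and follows essentially the same route as the paper: take the convex combination $\{\alpha S_j + (1-\alpha)T_j\}_j$ of the two optimal PPT POVMs as the witness, bound each summand by its own two-sided constraint, and absorb $\alpha r_0 S_j^{T_B} + (1-\alpha) r_1 T_j^{T_B}$ into $\max\{r_0,r_1\}\bigl(\alpha S_j^{T_B} + (1-\alpha)T_j^{T_B}\bigr)$ using positivity of the partial transposes. Your added remark on why the bound is a maximum rather than a convex combination is a correct and useful observation that the paper does not spell out.
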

\begin{proof}
Let $\mathscr{D}^{\cR}_{\infty}(\boldsymbol{M}_0)= r_0, \mathscr{D}^{\cR}_{\infty}(\boldsymbol{M}_1) = r_1$ and denote the optimal POVMs for $\mathscr{D}^{\cR}_{\infty}(\boldsymbol{M}_0)$ and $\mathscr{D}^{\cR}_{\infty}(\boldsymbol{M}_1)$ are $\{S_{j}\}_j$ and $\{T_j\}_j$, respectively. By definition, we have $\forall j$
\begin{equation}
    -r_0 S_j^{T_B} \leq M_{0,j}^{T_B} - S_j^{T_B} \leq r_0 S_j^{T_B},
\end{equation}
and 
\begin{equation}
    -r_1 T_j^{T_B} \leq M_{1,j}^{T_B} - T_j^{T_B} \leq r_1 T_j^{T_B}.
\end{equation}
Since $S_j\geq 0$ and $T_j\geq 0$, it follows that 
\begin{equation}
\begin{aligned}
    &\; \alpha M_{0,j}^{T_B}+ (1-\alpha) M_{1,j}^{T_B}- \alpha S_j^{T_B} - (1-\alpha) T_j^{T_B} \\
    \leq&\; \alpha r_0 S_j^{T_B} + (1-\alpha)r_1 T_j^{T_B}\\
    \leq&\; \max\{r_0,r_1\}\left(\alpha S_j^{T_B} + (1-\alpha) T_j^{T_B}\right)
\end{aligned}
\end{equation}
Similarly, we have
\begin{equation}
\begin{aligned}
    &\; \alpha M_{0,j}^{T_B}+ (1-\alpha) M_{1,j}^{T_B}- \alpha S_j^{T_B} - (1-\alpha) T_j^{T_B} \\
    \geq&\; - \alpha r_0 S_j^{T_B} - (1-\alpha)r_1 T_j^{T_B}\\
    \geq&\; - \max\{r_0,r_1\}\left(\alpha S_j^{T_B} + (1-\alpha) T_j^{T_B}\right)
\end{aligned}
\end{equation}
We know that the convex combination $\{\alpha S_j + (1-\alpha)T_j\}_j$ is a valid PPT POVM. It follows that 
\begin{equation}
    \Delta\left( \alpha M_{0,j}^{T_B}+ (1-\alpha) M_{1,j}^{T_B} ~\big\|~ \alpha S_j^{T_B} - (1-\alpha) T_j^{T_B}\right) \leq \max\{r_0,r_1\},~\forall j
\end{equation}
Hence, we have $\mathscr{D}^{\cR}_{\infty}(\alpha\boldsymbol{M}_0 + (1-\alpha)\boldsymbol{M}_1) \leq \max\{\mathscr{D}^{\cR}_{\infty}(\boldsymbol{M}_0), \mathscr{D}^{\cR}_{\infty}(\boldsymbol{M}_1)\}$.
\end{proof}

Apart from these properties and its intuition for characterizing how much a POVM deviates from PPT POVMs, we will demonstrate that this quantity is directly related to the amount of entanglement required for optimal discrimination using PPT POVMs in the next section.

\section{Entanglement cost of PPT discrimination}
Now, we study the amount of entanglement required to discriminate a set of quantum states by PPT POVMs optimally. Herein, optimality refers to achieving the maximal average success probability without any restrictions on POVMs. We note the spectral PPT-distance of a POVM provides insights into the minimum amount of entanglement required to promote a PPT POVM to the given POVM. We will demonstrate this aspect through its application in quantum state discrimination tasks in the following.

Recall the optimal average success probability of discriminating two quantum states is given by Eq.~\eqref{Eq:suc_prob} when there are no restrictions on the POVM one is allowed to perform. However, an inherent challenge that arises in practice is the spatial distribution of quantum facilities, leading to the constraint that each laboratory can execute local quantum operations with mutual classical communication available. This motivates us to consider POVMs under locality constraints, including LOCC POVMs, SEP POVMs, and PPT POVMs as illustrated in Eq.~\eqref{Eq:povm_hire}. Despite extensive research characterizing the minimum amount of entanglement required to discriminate quantum states by POVMs under locality constraints~\cite{Yu2014,Bandyopadhyay2015LimSep, Gungor2016, Bandyopadhyay2018, Bandyopadhyay_2021}, a quantitative analysis is still lacking. To address this problem, we first introduce the entanglement-assisted average success probability where two parties are allowed to share entanglement to assist their QSD tasks.

\begin{figure}[t]
    \centering
    \includegraphics[width=0.55\linewidth]{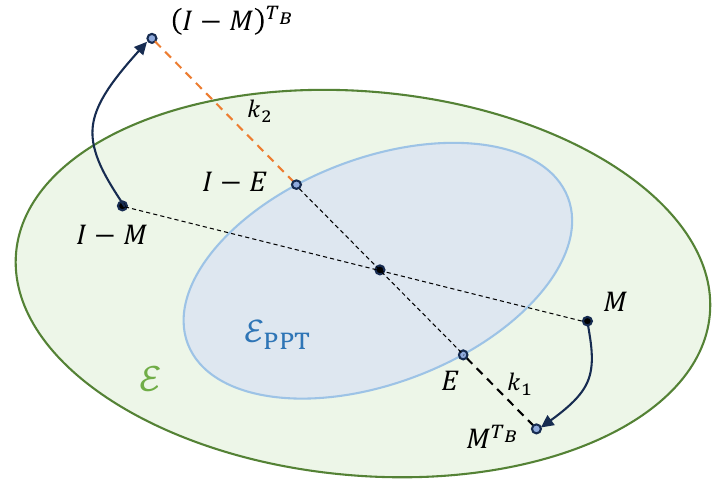}
    \caption{The spectral PPT-distance of a given binary-valued POVM $\boldsymbol{M} = \{M,I-M\}$. It is defined as the minimum spectral distance from the partial transpose of $M, I-M$ to the set of PPT POVMs. That is, $\mathscr{D}_{\infty}(\boldsymbol{M})=\min\max\{k_1,k_2\}$ where the minimization ranges over all possible choices of $E, I-E \in\cE_{\PPT}$. Each POVM is associated with a pair of points in the set, connected by a dashed line that intersects the origin point $I/2$.}
    \label{fig:Ent_cost_dis}
\end{figure}

\begin{definition}[Entanglement-assisted average success probability]
Given a quantum state ensemble $\Omega = \{(p_j,\rho_j)\}_{j=0}^{n-1}$, where $p_j\geq 0$ and $\sum_j p_j =1$, and a positive integer $k\geq 2$, the entanglement-assisted average success probability of discriminating $\Omega$ via operation class $\Pi$ is defined as
\begin{equation}
    P_{\suc,e}^{\Pi}(\Omega; k) = \max_{ \{E_j\} \in \Pi} \sum_{j=0}^{n-1} p_j\tr[E_j (\rho_j \ox \Phi_k^+)],
\end{equation}
where $\{E_j\}_{j=0}^{n-1}$ is a POVM,  $\Pi\in\{\LOCC,\SEP,\PPT\}$.
\end{definition}

We also denote $P_{\suc}^{\Pi}(\Omega)$ as the optimal average success probability when using POVMs in $\Pi$. Then it is natural to formally define the \textit{entanglement cost of optimal $\Pi$-discrimination} by considering how much entanglement it needs to render this entanglement-assisted average success probability identical to the original optimal success probability with no restrictions on POVMs.
\begin{definition}[Entanglement cost of optimal $\Pi$-discrimination]
Given a quantum state ensemble $\Omega = \{(p_j,\rho_j)\}_{j=0}^{n-1}$, where $p_j\geq 0$ and $\sum_j p_j =1$, the entanglement cost of optimal $\Pi$-discrimination is defined as
\begin{equation}
    E^{\Pi}_{C}(\Omega) = \min \left\{\log k:  P_{\suc,e}^{\Pi}(\Omega; k) = P_{\suc}(\Omega), k \geq 2\right\}.
\end{equation}
If $P_{\suc}^{\Pi}(\Omega) = P_{\suc}(\Omega)$, then $E^{\Pi}_{C}(\Omega)$ is set to be zero.
\end{definition}
Throughout the paper, we take the logarithm to be base two unless stated otherwise. In particular, consider the scenario of discriminating a pair of quantum states $\rho_0$ and $\rho_1$ given with the same prior probability as the results for asymmetric probabilities can be easily generalized. We write the entanglement-assisted average success probability of discriminating $\rho_0$ and $\rho_1$ by PPT POVMs as $P_{\suc,e}^{\PPT}(\rho_0, \rho_1; k)$ and establish its SDP form in the following Proposition~\ref{prop:ppt_ave_suc_prob}.
The derivation of its dual form can be found in Appendix~\ref{appendix:dual_EAASP}.

\begin{proposition}\label{prop:ppt_ave_suc_prob}
Given two bipartite quantum states $\rho_0,\rho_1 \in \cD(\cH_A\ox\cH_B)$ with the same prior probability, the entanglement-assisted average success probability of PPT-discrimination is
\begin{equation}
\begin{aligned}\label{Eq:ea_psuc}
P_{\suc,e}^{\PPT}(\rho_0, \rho_1 ; k) = &\max_{W_{j},Q_{j}}\; \frac{1}{2}\tr(\rho_0W_0 + \rho_1W_1) \\ 
    &\;\text{ s.t. }\left\{
     \begin{aligned}
     &\; W_j,Q_j \in \cE,\, j=0,1,\\
     &\; W_0+W_1=Q_0+Q_1=I,\\
     & -kQ_j^{T_B}\leq W_j^{T_B}-Q_j^{T_B} \leq  kQ_j^{T_B}.
     \end{aligned}
     \right.
\end{aligned}
\end{equation}
\end{proposition}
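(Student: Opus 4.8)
The plan is to start from the definition $P_{\suc,e}^{\PPT}(\rho_0,\rho_1;k)=\max\frac12\sum_j\tr[E_j(\rho_j\ox\Phi_k^+)]$, where $\{E_0,E_1\}$ ranges over PPT POVMs on the enlarged system $AA'B'B$ with $A',B'$ carrying the resource $\Phi_k^+$ of Schmidt rank $k$, and the PPT cut is $AA'\,|\,BB'$. This is an SDP in the variables $E_j$ on $AA'BB'$; the goal is to eliminate the resource registers $A',B'$ and land on the stated SDP over operators on $AB$ only.

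First I would exploit the $U\ox\overline U$-invariance of $\Phi_k^+$. Since $(I_A\ox U_{A'}\ox I_B\ox\overline U_{B'})(\rho_j\ox\Phi_k^+)(\cdots)^\dagger=\rho_j\ox\Phi_k^+$ for every unitary $U$ on $\mathbb{C}^k$, twirling each $E_j$ over the group $\{I_A\ox U_{A'}\ox I_B\ox\overline U_{B'}\}$ leaves the objective unchanged. This averaging is a conjugation by local unitaries $(I_A\ox U_{A'})$ on Alice and $(I_B\ox\overline U_{B'})$ on Bob, so it preserves positivity, the normalization $E_0+E_1=I$, and---by the local-unitary covariance of the partial transpose---the PPT constraint; hence the twirled operators remain feasible and attain the same value. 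The twirled $E_j$ commute with all $U_{A'}\ox\overline U_{B'}$, and since the commutant of this representation on $A'B'$ is spanned by $I_{A'B'}$ and $\Phi_k^+$, I can write $E_j=X_j\ox I_{A'B'}+Y_j\ox k\Phi_k^+$ for Hermitian $X_j,Y_j$ on $AB$.

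The remaining work is bookkeeping in this parametrization. Using that $\Phi_k^+$ is a rank-one projector with $\tr\Phi_k^+=1$, the objective collapses to $\frac12\sum_j\tr[(X_j+kY_j)\rho_j]$, so I set $W_j:=X_j+kY_j$ and $Q_j:=X_j$. Normalization $E_0+E_1=I$ splits, by independence of $I_{A'B'}$ and $\Phi_k^+$, into $\sum_jX_j=I$ and $\sum_jY_j=0$, i.e. $Q_0+Q_1=W_0+W_1=I$. For positivity, writing $I_{A'B'}=\Phi_k^++(\Phi_k^+)^\perp$ and using orthogonality of these projectors shows $E_j\ge0\iff W_j\ge0$ and $Q_j=X_j\ge0$; combined with the normalizations this gives $W_j,Q_j\in\cE$. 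Finally, $(\Phi_k^+)^{T_{B'}}=\tfrac1k\mathbb{F}$ yields $E_j^{T_{BB'}}=X_j^{T_B}\ox I_{A'B'}+Y_j^{T_B}\ox\mathbb{F}$; decomposing $I_{A'B'}=\Pi_{\text{sym}}+\Pi_{\text{asym}}$ and $\mathbb{F}=\Pi_{\text{sym}}-\Pi_{\text{asym}}$ into the orthogonal symmetric/antisymmetric projectors (both nonzero since $k\ge2$) turns $E_j^{T_{BB'}}\ge0$ into the pair $X_j^{T_B}\pm Y_j^{T_B}\ge0$. Substituting $X_j=Q_j$ and $Y_j=(W_j-Q_j)/k$ rewrites this exactly as $-kQ_j^{T_B}\le W_j^{T_B}-Q_j^{T_B}\le kQ_j^{T_B}$, reproducing the claimed SDP.

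I expect the main obstacle to be rigorously justifying the symmetry reduction: one must verify that the $U\ox\overline U$ twirl maps the PPT feasible set into itself (the local-unitary covariance of $T_{BB'}$ together with convexity and closedness of the PPT cone) and correctly identify the two-dimensional commutant, so that restricting to twirl-invariant $E_j$ loses no optimality. The subsequent block-diagonalizations of $E_j$ and $E_j^{T_{BB'}}$ along the $\Phi_k^+/(\Phi_k^+)^\perp$ and sym/antisym decompositions are then routine, as is the converse direction: any feasible $(W_j,Q_j)$ reconstitutes a feasible $E_j$ by reversing the identities $X_j=Q_j$, $Y_j=(W_j-Q_j)/k$, so the two SDPs share the same optimal value.
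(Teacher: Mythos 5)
Your proposal is correct and follows essentially the same route as the paper's proof: twirl over $U_{A'}\ox \overline{U}_{B'}$, invoke the two-dimensional commutant to reduce to operators of the form $W_j\ox\Phi_k^+ + Q_j\ox(I-\Phi_k^+)$ (your $X_j\ox I + Y_j\ox k\Phi_k^+$ is the same decomposition in a different basis, with $Q_j=X_j$, $W_j=X_j+kY_j$), and read off the PPT constraint from the symmetric/antisymmetric blocks of the partially transposed resource. Your explicit mention of the converse direction (reconstituting a feasible $E_j$ from any feasible $(W_j,Q_j)$) is a small point the paper leaves implicit but is needed for equality of the two optimal values.
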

\begin{proof}
By definition, the entanglement-assisted average success probability by PPT POVMs can be computed via the following SDP,
\begin{align}\label{Eq:suc_prob_k}
P_{\suc,e}^{\PPT}(\rho_0, \rho_1; k) = \max &\; \frac{1}{2}+\frac{1}{2}\tr\Big[E [(\rho_0-\rho_1) \ox \Phi_k^{+}]\Big] \nonumber \\
     {\rm s.t.} & \;\; 0\leq E^{T_{BB'}}\leq I ,\\
     & \;\; 0\leq E \leq I \nonumber,
\end{align}
where $\Phi^{+}_k$ is the maximally entangled state on $\cH_{A'}\ox\cH_{B'}$ with $d_{A'}=d_{B'}=k$.
To further simplify the optimization problem, we notice that we can write 
\begin{equation}
\tr[E( (\rho_0 - \rho_1) \ox \Phi_k^{+}) ] = \tr[(\rho_0 - \rho_1)\tr_{A'B'}((I_{AB}\ox \Phi_k^{+})E)]. 
\end{equation}
Suppose $E$ is an optimal solution. Since $\Phi_k^{+}$ is invariant under any local unitary $U_{A'}\ox U^*_{B'}$, i.e., $(U_{A'}\ox U^*_{B'})\Phi_k^{+}(U_{A'}\ox U^*_{B'})^\dagger = \Phi_k^{+}$, $(U_{A'}\ox U^*_{B'})E(U_{A'}\ox U^*_{B'})^\dagger$ is also an optimal solution. Since any convex combination of optimal solutions remains optimal, we have that
\begin{equation}
    \widetilde{E} :=\int dU  (U_{A'}\ox U^*_{B'}) E (U_{A'}\ox U^*_{B'})^\dagger
\end{equation}
is also optimal, where $dU$ is the Haar measure. {The resulting operator will commute with all local unitaries $U_{A'}\ox U^*_{B'}$, as it is now invariant under their action. By applying Schur’s lemma~\cite{woit2017quantum} on the unitary group $\mathrm{U}(A)\ox \mathrm{U}(B)$, $E$ must act as a scalar multiple of the identity on each invariant subspace of $\cH_A\ox\cH_B$. This means that $E$ can be decomposed into a form where one component acts on the subspace spanned by $\Phi_k^{+}$ and the other on the orthogonal subspace.  
} Then we can restrict the optimal $\widetilde{E}$ as
\begin{equation}\label{eq:schur_optimal}
    \widetilde{E} = W_{AB}\ox\Phi_k^{+} + Q_{AB}\ox (I-\Phi_k^{+}),
\end{equation}
with certain linear operators $W_{AB}$ and $Q_{AB}$. It follows that
\begin{equation}
    \tr_{A'B'}[(I_{AB}\ox \Phi_k^{+})\widetilde{E}]= \tr_{A'B'}[ (I_{AB}\ox \Phi_k^{+})(W_{AB}\ox\Phi_k^{+} + Q_{AB}\ox (I-\Phi_k^{+})) ] = W_{AB},
\end{equation}
which can simplify the objective function as
\begin{equation}
    \tr\left[ {\widetilde{E}}[(\rho_0 - \rho_1) \otimes \Phi_k^+] \right] = \tr\left[ (\rho_0 - \rho_1) W_{AB}\right].
\end{equation}
By spectral decomposition, we have $(\Phi_k^{+})^{T_B}=\left(P_{+}-P_{-}\right)/k$, where $P_{+}$ and $P_{-}$ are the symmetric and anti-symmetric projections respectively. Then it follows that
\begin{equation}
\begin{aligned}
\widetilde{E}^{T_{B B^{\prime}}} &=W_{A B}^{T_B} \otimes (\Phi_k^{+})^{T_{B^{\prime}}}+Q_{A B}^{T_B} \otimes\left(I-\Phi_k^{+}\right)^{T_{B^{\prime}}} \\
& =W_{A B}^{T_B} \otimes \frac{P_{+}-P_{-}}{k}+Q_{A B}^{T_B} \otimes \frac{(k-1) P_{+}+(k+1) P_{-}}{k} \\
& =\left[W_{A B}^{T_B}+(k-1) Q_{A B}^{T_B}\right] \ox \frac{P_{+}}{k} +\left[-W_{A B}^{T_B}+(k+1) Q_{A B}^{T_B}\right] \ox \frac{P_{-}}{k}.
\end{aligned}
\end{equation}
Since $P_{+}$ and $P_{-}$ are positive and orthogonal to each other, we have $0 \leq \widetilde{E}^{T_{BB'}} \leq I$ if and only if 
\begin{equation}\label{eq:tildee_less_I}
\begin{aligned}
    & 0\leq W_{A B}^{T_B}+(k-1) Q_{A B}^{T_B} \leq kI_{AB}, \\
    & 0\leq -W_{A B}^{T_B}+(k+1) Q_{A B}^{T_B} \leq kI_{AB}.
\end{aligned}
\end{equation}
Similarly, we have $0 \leq \widetilde{E} \leq I$ if and only if
\begin{equation}\label{eq:e_less_I}
    0 \leq W_{AB} \leq I_{AB},\; 0 \leq Q_{AB} \leq I_{AB}
\end{equation}
{Combining Eq.~\eqref{eq:tildee_less_I} and~\eqref{eq:e_less_I}, the original optimization problem becomes,
\begin{equation}
\begin{aligned}\label{Eq:w0_ppt_optimization}
P_{\suc,e}^{\PPT}(\rho_0, \rho_1 ; k) = &\max_{W_{j},Q_{j}}\; \frac{1}{2} + \frac{1}{2}\tr[(\rho_0 - \rho_1) W_{AB}] \\ 
    &\;\text{ s.t. }\left\{
     \begin{aligned}
     &\; 0 \leq W_{AB} \leq I_{AB}, \;\; 0 \leq Q_{AB} \leq I_{AB}, \\
     &\; -(k-1) Q_{A B}^{T_B} \leq W_{A B}^{T_B} \leq  (1+k) Q_{A B}^{T_B} ,\\
     &\; -k (I_{AB} - Q_{A B}^{T_B}) + Q_{A B}^{T_B} \leq W_{A B}^{T_B} \leq k (I_{AB} - Q_{A B}^{T_B}) + Q_{AB}^{T_B}.
     \end{aligned}
     \right.
\end{aligned}
\end{equation}
Denoting $W_{AB} := W_0$, $Q_{AB} := Q_1$, $Q_1 := I - Q_{0}$ and $W_1 := I - W_{0}$, the last two constraints can be written as,
\begin{equation}\label{eq:w0_w1_ppt}
\begin{aligned}
    -kQ_{0}^{T_B} &\leq W_0^{T_B} - Q_0^{T_B} \leq kQ_{0}^{T_B}, \\
    -kQ_{1}^{T_B} &\leq W_1^{T_B} - Q_1^{T_B} \leq kQ_{1}^{T_B}.
\end{aligned}
\end{equation}
The objective function can then be equivalently written as,
\begin{equation}\label{eq:w0_w1_obj}
    \frac{1}{2} + \frac{1}{2}\tr[(\rho_0 - \rho_1)W_0] = \frac{1}{2}\tr[\rho_0 W_0 + \rho_1 W_1].
\end{equation}
Taking the constraints in Eq.~\eqref{eq:w0_w1_ppt} and objective function in Eq.~\eqref{eq:w0_w1_obj} into the optimization problem in Eq.~\eqref{Eq:w0_ppt_optimization}, the entanglement cost of optimal PPT-discrimination can be simplified and calculated by the optimization problem in Eq.~\eqref{Eq:ea_psuc}.}
\end{proof}

Building on the formulation of the entanglement-assisted average success probability in proposition~\ref{prop:ppt_ave_suc_prob}, we can show that the entanglement cost of optimal PPT-discrimination of $\rho_0, \rho_1$ is actually characterized by the minimum relative spectral PPT-distance of all possible unrestricted optimal POVMs.
\begin{proposition}\label{prop:ent_cost_dis}
Given two bipartite quantum states $\rho_0,\rho_1 \in \cD(\cH_A\ox\cH_B)$ with the same prior probability, the entanglement cost of optimal PPT-discrimination is given by,
\begin{equation}\label{Eq:ent_cost_dis}
\begin{aligned}
    E^{\PPT}_{C}(\rho_0,\rho_1) = \log &\min \lceil \mathscr{D}^{\cR}_{\infty}(\boldsymbol{W}) \rceil \\
    &\textrm{ s.t. } \left\{
    \begin{aligned}
        &\boldsymbol{W}=\{W_0,W_1\} \text{ is a POVM,}\\
        &\tr[(\rho_0 - \rho_1) W_0] = \frac{1}{2}\|\rho_0 - \rho_1\|_1.\\
    \end{aligned}
    \right.
\end{aligned}
\end{equation}
\end{proposition}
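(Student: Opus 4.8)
The plan is to translate the SDP characterization of $P_{\suc,e}^{\PPT}(\rho_0,\rho_1;k)$ from Proposition~\ref{prop:ppt_ave_suc_prob} into a statement about the relative spectral PPT-distance, after first pinning down which POVMs are unrestricted-optimal. With equal priors the objective rewrites as $\frac12\tr(\rho_0W_0+\rho_1W_1)=\frac12+\frac12\tr[(\rho_0-\rho_1)W_0]$, so by the Helstrom bound $P_{\suc}(\rho_0,\rho_1)=\frac12+\frac14\|\rho_0-\rho_1\|_1$, and a POVM $\boldsymbol{W}=\{W_0,W_1\}$ is optimal exactly when $\tr[(\rho_0-\rho_1)W_0]=\frac12\|\rho_0-\rho_1\|_1$. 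This is precisely the feasibility condition in the minimization of Eq.~\eqref{Eq:ent_cost_dis}, so ``optimal POVM'' and ``feasible $\boldsymbol{W}$'' mean the same thing below.

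The key observation is that the constraint block of Eq.~\eqref{Eq:ea_psuc} \emph{is} the relative spectral PPT-distance: for a fixed POVM $\boldsymbol{W}$, the existence of a PPT POVM $\boldsymbol{Q}=\{Q_0,Q_1\}$ with $-kQ_j^{T_B}\leq W_j^{T_B}-Q_j^{T_B}\leq kQ_j^{T_B}$ for $j=0,1$ is, by the definition of the relative spectral PPT-distance in Eq.~\eqref{eq:relative_spectral_ppt}, exactly the statement $\mathscr{D}^{\cR}_{\infty}(\boldsymbol{W})\leq k$ (the ``only if'' direction uses $rQ_j^{T_B}\leq kQ_j^{T_B}$ whenever $0\leq r\leq k$ and $Q_j^{T_B}\geq0$). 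Hence Proposition~\ref{prop:ppt_ave_suc_prob} says
\[
P_{\suc,e}^{\PPT}(\rho_0,\rho_1;k)=\max\Big\{\tfrac12\tr(\rho_0W_0+\rho_1W_1)\ :\ \boldsymbol{W}\text{ a POVM},\ \mathscr{D}^{\cR}_{\infty}(\boldsymbol{W})\leq k\Big\}.
\]
Enlarging $k$ only relaxes this feasible set, so $P_{\suc,e}^{\PPT}(\rho_0,\rho_1;k)$ is nondecreasing in $k$; and since the maximization runs over a subset of all POVMs, each of which yields objective at most $P_{\suc}$ by the Helstrom bound, it is bounded above by $P_{\suc}(\rho_0,\rho_1)$. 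Therefore equality $P_{\suc,e}^{\PPT}(\rho_0,\rho_1;k)=P_{\suc}(\rho_0,\rho_1)$ holds iff some optimal $\boldsymbol{W}$ has $\mathscr{D}^{\cR}_{\infty}(\boldsymbol{W})\leq k$, i.e. iff $k\geq r^\star:=\min\{\mathscr{D}^{\cR}_{\infty}(\boldsymbol{W}):\boldsymbol{W}\text{ optimal}\}$, the minimum being attained since the optimal POVMs form a compact set on which $\mathscr{D}^{\cR}_{\infty}$ is lower semicontinuous. For integer $k$ this reads $k\geq\lceil r^\star\rceil$, and as $\lceil\cdot\rceil$ is monotone the smallest such $k$ equals $\lceil r^\star\rceil=\min\lceil\mathscr{D}^{\cR}_{\infty}(\boldsymbol{W})\rceil$, giving $E^{\PPT}_{C}=\log\lceil r^\star\rceil$.

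Finally I would reconcile this with the floor $k\geq2$ in the definition of $E^{\PPT}_{C}$ and the convention for the trivially PPT-discriminable case, via the gap property $\mathscr{D}^{\cR}_{\infty}(\boldsymbol{W})\in\{0\}\cup(1,\infty)$. Indeed, if $\boldsymbol{W}$ is not a PPT POVM then $W_j^{T_B}\not\geq0$ for some $j$, while any feasible $r\leq1$ would force $W_j^{T_B}\geq(1-r)Q_j^{T_B}\geq0$, a contradiction; thus every non-PPT $\boldsymbol{W}$ has $\mathscr{D}^{\cR}_{\infty}(\boldsymbol{W})>1$, while PPT $\boldsymbol{W}$ has distance $0$. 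Consequently, when no optimal PPT POVM exists, $r^\star>1$ forces $\lceil r^\star\rceil\geq2$, so the floor $k\geq2$ is automatically satisfied and the formula gives $E_C^{\PPT}=\log\lceil r^\star\rceil$; when an optimal PPT POVM does exist, $r^\star=0$, i.e. $P_{\suc}^{\PPT}=P_{\suc}$, and $E^{\PPT}_{C}=0$ by the stated convention. The step I expect to be most delicate is precisely this last bookkeeping: ensuring the integer ceiling and the $k\geq2$ floor interact correctly, which rests entirely on the gap observation $\mathscr{D}^{\cR}_{\infty}\notin(0,1]$; attainment of the minimum $r^\star$ (compactness of the optimal-POVM set and lower semicontinuity of $\mathscr{D}^{\cR}_{\infty}$) is a secondary technical point worth stating carefully.
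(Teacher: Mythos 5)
Your proof is correct and follows essentially the same route as the paper: rewrite the SDP of Proposition~\ref{prop:ppt_ave_suc_prob}, impose the Helstrom optimality condition $\tr[(\rho_0-\rho_1)W_0]=\tfrac12\|\rho_0-\rho_1\|_1$ to characterize the feasible $\boldsymbol{W}$, and recognize the remaining constraint block (for fixed $\boldsymbol{W}$) as exactly the statement $\mathscr{D}^{\cR}_{\infty}(\boldsymbol{W})\le k$. Your extra bookkeeping --- monotonicity in $k$, attainment of the minimum, and especially the gap property $\mathscr{D}^{\cR}_{\infty}(\boldsymbol{W})\notin(0,1]$ that reconciles the ceiling with the $k\ge 2$ floor in the definition of $E^{\PPT}_{C}$ --- is more careful than the paper's two-line argument and genuinely closes a point the paper glosses over.
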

\begin{proof}
Based on Proposition~\ref{prop:ppt_ave_suc_prob}, the entanglement cost of optimal PPT-discrimination of $\rho_0, \rho_1$ can be written as the following optimization problem,
\begin{equation}\label{Eq:ent_cost}
\begin{aligned}
     E^{\PPT}_{C}(\rho_0,\rho_1) = \log &\min \;\lceil k\rceil \\
     &\text{ s.t. } \left\{ 
     \begin{aligned}
     &\; W_j,Q_j \in \cE,\, j=0,1,\\
     &\; W_0+W_1=Q_0+Q_1=I,\\
     &\;\tr\left[ (\rho_0 - \rho_1) W_0\right] = \frac{1}{2}\| \rho_0 - \rho_1\|_1,\\
     &-kQ_j^{T_B}\leq W_j^{T_B}-Q_j^{T_B} \leq  kQ_j^{T_B}.\\
     \end{aligned}
     \right.
\end{aligned}
\end{equation}
It can be noticed that the last constraint and the objective function are related to the relative spectral PPT-distance of $\boldsymbol{W}$ as
\begin{equation}
    \mathscr{D}^{\cR}_{\infty}(\boldsymbol{W}) = \min\bigg\{\, k\,\bigg|\, -k Q_j^{T_B} \leq W_j^{T_B} - Q_j^{T_B} \leq k Q_j^{T_B},\, \{Q_j\}\in \boldsymbol{\rm PPT}_{AB}\bigg\},
\end{equation}
By taking the ceiling of the objective function, it follows that the optimization for entanglement cost in Eq.~\eqref{Eq:ent_cost} can be equivalently expressed as in Eq.~\eqref{Eq:ent_cost_dis}.
\end{proof}

Proposition~\ref{prop:ent_cost_dis} reveals that an optimal general entangled POVM having the minimum relative spectral PPT-distance can be used to fully characterize the entanglement cost of optimal PPT-discrimination. 
\begin{remark}
We emphasize that the optimization problem for the entanglement cost in Eq.~\eqref{Eq:ent_cost_dis} is not a valid SDP. However, it reveals two essential aspects of the entanglement cost of optimal discrimination. Firstly, we derive an efficiently computable SDP lower bound for it which is provided in Appendix~\ref{Appendix:sdp_lower_upper}. Notably, by taking the ceiling of the solution to an SDP, the spectral PPT-distance of a POVM also lower bound the entanglement cost of optimal PPT-discrimination, i.e.,
\begin{equation}\label{Eq:lb_spec_dist}
\begin{aligned}
    E^{\PPT}_{C}(\rho_0,\rho_1) = \log \min \lceil \mathscr{D}^{\cR}_{\infty}(\boldsymbol{W}) \rceil \geq &\log \min \lceil \mathscr{D}_{\infty}(\boldsymbol{W})\rceil \\ 
    &\textrm{ s.t. } \left\{
    \begin{aligned}
        &\boldsymbol{W}=\{W_0,W_1\} \text{ is a POVM,}\\
        &\tr[(\rho_0 - \rho_1) W_0] = \frac{1}{2}\|\rho_0 - \rho_1\|_1.
    \end{aligned}
    \right.
\end{aligned}
\end{equation}
This can be checked by the definition of Eq.~\eqref{Eq:lb_spec_dist} and each feasible solution for the optimization problem Eq.~\eqref{Eq:ent_cost} is also a feasible solution to Eq.~\eqref{Eq:lb_spec_dist}. Secondly, we will further show in Theorem~\ref{thm:SDP_hierarchy} that the optimization problem in Eq.~\eqref{Eq:ent_cost_dis} can be computed via an SDP hierarchy that helps us precisely estimate the entanglement cost of optimal PPT-discrimination. This result will demonstrates that the relative spectral distance of a POVM is interlinked with the entanglement cost of optimal PPT-discrimination in entanglement-assisted quantum state discrimination tasks, thereby providing an operational meaning for this quantity. 
\end{remark}

Before showing the SDP hierarchy, we first present an upper bound on the entanglement cost of optimal PPT-discrimination for two arbitrary quantum states.

\begin{theorem}\label{thm:spec_upperbound}
Given two bipartite quantum states $\rho_0,\rho_1 \in \cD(\cH_A\ox\cH_B)$ with the same prior probability, the entanglement cost of optimal PPT-discrimination satisfies
\begin{equation}
    E_{C}^{\PPT}(\rho_0 , \rho_1) \leq \log \left\lceil \left\| M_+^{T_B} - M_-^{T_B} \right\|_\infty \right\rceil,
\end{equation}
where $M_{+}$ and $M_{-}$ are projections onto the positive and non-positive eigenspaces of $\rho_0 - \rho_1$, respectively.
\end{theorem}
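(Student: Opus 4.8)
The plan is to realize this upper bound by taking the Helstrom POVM as a feasible point in the optimization that characterizes $E_C^{\PPT}$ in Proposition~\ref{prop:ent_cost_dis}, and then to bound its relative spectral PPT-distance from above using the simplest admissible PPT POVM, namely $\{I/2, I/2\}$.

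First I would take $\boldsymbol{W} = \{M_+, M_-\}$ with $M_- = I - M_+$. Both elements are projections, so this is a valid POVM. To verify the optimality constraint in Eq.~\eqref{Eq:ent_cost_dis}, decompose $\rho_0 - \rho_1 = P - N$ into its positive and negative parts with $PN = 0$; since $\tr[\rho_0 - \rho_1] = 0$ forces $\tr P = \tr N$, one gets $\tr[(\rho_0 - \rho_1)M_+] = \tr P = \tfrac{1}{2}\|\rho_0 - \rho_1\|_1$. Thus $\boldsymbol{W}$ is feasible, and Proposition~\ref{prop:ent_cost_dis} immediately yields $E_C^{\PPT}(\rho_0, \rho_1) \leq \log\lceil \mathscr{D}^{\cR}_\infty(\{M_+, M_-\})\rceil$.

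Next I would estimate $\mathscr{D}^{\cR}_\infty(\{M_+, M_-\})$ from above by inserting the PPT POVM $E_0 = E_1 = I/2$ into its defining minimization; this is admissible because $(I/2)^{T_B} = I/2 \geq 0$. The two defining operator inequalities then become $-\tfrac{r}{2} I \leq M_\pm^{T_B} - \tfrac{1}{2} I \leq \tfrac{r}{2} I$, and since $M_-^{T_B} = I - M_+^{T_B}$ the two conditions coincide. Using $M_+^{T_B} - \tfrac{1}{2} I = \tfrac{1}{2}(M_+^{T_B} - M_-^{T_B})$, the inequality reduces to the scalar bound $\|M_+^{T_B} - M_-^{T_B}\|_\infty \leq r$, so that $\mathscr{D}^{\cR}_\infty(\{M_+, M_-\}) \leq \|M_+^{T_B} - M_-^{T_B}\|_\infty$. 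Combining this with the previous step and invoking monotonicity of the ceiling and the logarithm delivers the claimed bound.

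Since every ingredient is an explicit construction, I do not expect a genuine obstacle. The only delicate point is the reduction of the two-sided operator inequalities to a single scalar condition on the spectral norm, which relies on the complementarity $M_- = I - M_+$ together with the invariance of $I/2$ under partial transpose. Everything else is a direct substitution into the formula of Proposition~\ref{prop:ent_cost_dis}.
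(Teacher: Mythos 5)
Your proposal is correct and follows essentially the same route as the paper: the paper also takes the Helstrom projectors as the unrestricted POVM, chooses $Q_0 = I/2$ as the reference PPT POVM, and rearranges $-k\,I/2 \leq M_+^{T_B} - I/2 \leq k\,I/2$ into $-kI \leq M_+^{T_B} - M_-^{T_B} \leq kI$. Your write-up is in fact slightly more complete, since you explicitly verify the optimality constraint $\tr[(\rho_0-\rho_1)M_+] = \tfrac{1}{2}\|\rho_0-\rho_1\|_1$, which the paper only asserts.
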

We leave the proof in Appendix~\ref{appendix:proof_of_thms}. Notably, Theorem~\ref{thm:spec_upperbound} provides an easy-to-compute upper bound for the entanglement cost of optimal PPT-discrimination of a given pair of quantum states $\{\rho_0,\rho_1\}$ with equal prior probability, which only relies on the spectrum properties of the given states. Consequently, this upper bound can be directly determined by the ranks of $\rho_0$ and $\rho_1$ as the following corollary.

\begin{corollary}\label{cor:rank_Ec}
For two bipartite quantum states $\rho_0,\rho_1 \in \cD(\cH_A\ox\cH_B)$ with $\rank\rho_0 = r_0 \leq r_1 = \rank \rho_1$,
\begin{equation}
E_C^{\PPT} (\rho_0, \rho_1) \leq \log \eta,
\end{equation}
where $\eta = \max \{2r_0-1, r_0+1\}$.
\end{corollary}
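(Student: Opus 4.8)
The plan is to deduce the corollary directly from Theorem~\ref{thm:spec_upperbound}, reducing everything to a rank-based bound on $\|M_+^{T_B}-M_-^{T_B}\|_\infty$. The first observation is that $M_+$ and $M_-$ project onto the positive and non-positive eigenspaces of $\rho_0-\rho_1$, which partition the whole space, so $M_-=I-M_+$ and hence $M_+^{T_B}-M_-^{T_B}=2M_+^{T_B}-I$. Thus the task becomes controlling the extreme eigenvalues of the partial transpose of the single low-rank projection $M_+$.

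Next I would pin down $\rank M_+$. Since $\rho_1\geq 0$ we have $\rho_0-\rho_1\preceq\rho_0$, and Weyl's monotonicity gives $\lambda_k(\rho_0-\rho_1)\leq\lambda_k(\rho_0)$ for every $k$. As $\rho_0$ has only $r_0$ nonzero (hence positive) eigenvalues, $\rho_0-\rho_1$ has at most $r_0$ strictly positive eigenvalues, so $s:=\rank M_+\leq r_0$. The key estimate then writes $M_+=\sum_{i=1}^{s}\ketbra{\psi_i}{\psi_i}$ with orthonormal $\ket{\psi_i}$ and bounds the spectrum of each rank-one partial transpose: using the Schmidt decomposition $\ket{\psi_i}=\sum_k\sqrt{p_k}\ket{a_k}\ket{b_k}$, the eigenvalues of $(\ketbra{\psi_i}{\psi_i})^{T_B}$ are exactly $\{p_k\}$ together with $\{\pm\sqrt{p_kp_l}\}_{k<l}$. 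Because $\sqrt{p_kp_l}\leq\max_k p_k\leq 1$ and $\sqrt{p_kp_l}\leq (p_k+p_l)/2\leq 1/2$, every such operator has all eigenvalues in $[-\tfrac12,1]$.

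Summing the $s$ terms and applying Weyl's subadditivity, $\lambda_{\max}(A+B)\leq\lambda_{\max}(A)+\lambda_{\max}(B)$ and $\lambda_{\min}(A+B)\geq\lambda_{\min}(A)+\lambda_{\min}(B)$, yields $\lambda_{\max}(M_+^{T_B})\leq s$ and $\lambda_{\min}(M_+^{T_B})\geq -s/2$. Consequently the eigenvalues of $2M_+^{T_B}-I$ lie in $[-s-1,\,2s-1]$, so that $\|M_+^{T_B}-M_-^{T_B}\|_\infty\leq\max\{2s-1,\,s+1\}$. Since $g(s)=\max\{2s-1,s+1\}$ is nondecreasing for $s\geq 1$ and $s\leq r_0$, this improves to $\max\{2r_0-1,\,r_0+1\}=\eta$; as $\eta$ is a positive integer, taking the ceiling and then $\log$ in Theorem~\ref{thm:spec_upperbound} gives $E_C^{\PPT}(\rho_0,\rho_1)\leq\log\eta$.

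I expect the one genuinely substantive ingredient to be the rank-one partial-transpose eigenvalue bound (the claim that $(\ketbra{\psi}{\psi})^{T_B}$ has spectrum in $[-\tfrac12,1]$ for every unit vector $\ket{\psi}$), which is where the Schmidt structure enters; the rest is bookkeeping with Weyl's inequalities and the monotonicity of $g$. The degenerate case $\rho_0=\rho_1$ forces $s=0$ and $E_C^{\PPT}=0$, so it is worth noting separately rather than folding into the main chain of inequalities.
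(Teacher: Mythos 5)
Your proof is correct and follows essentially the same route as the paper: bound $\rank M_+\leq r_0$ via Weyl's inequality, use the fact that a rank-one projector's partial transpose has spectrum in $[-\tfrac12,1]$ to get $-\tfrac{r_0}{2}I\leq M_+^{T_B}\leq r_0 I$, and feed this into Theorem~\ref{thm:spec_upperbound}. The only difference is cosmetic: you derive the $[-\tfrac12,1]$ spectral bound from the Schmidt decomposition, whereas the paper simply cites it, and you state the monotonicity of $\max\{2s-1,s+1\}$ and the degenerate case $\rho_0=\rho_1$ explicitly.
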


Following Corollary~\ref{cor:rank_Ec}, we can now present an SDP hierarchy that accurately estimates the entanglement cost of PPT discrimination through successive iterations. This hierarchy is guaranteed to converge within a reasonable number of steps.
\begin{theorem}\label{thm:SDP_hierarchy}
    For two bipartite quantum states $\rho_0,\rho_1 \in \cD(\cH_A\ox\cH_B)$ with $\rank\rho_0 = r_0 \leq r_1 = \rank \rho_1$, there exists an SDP hierarchy for the entanglement cost of optimal $\PPT$-discrimination $E_{C}^{\PPT}(\rho_0, \rho_1)$ which converges in at most $\lceil \log(\eta)\rceil$ steps where $\eta = \max\{2r_0 - 1, r_0 +1 \}$.
\end{theorem}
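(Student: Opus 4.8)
The plan is to convert the non-SDP program of Eq.~\eqref{Eq:ent_cost_dis} into a monotone one-parameter family of genuine SDP feasibility problems, indexed by the integer level $k$, and then to locate the optimal level by binary search. First I would fix an integer $k\ge 2$ and, reading off the constraints of Eq.~\eqref{Eq:ent_cost}, define the feasibility problem $\mathrm{F}(k)$: decide whether there exist $W_0,W_1,Q_0,Q_1$ with $W_j,Q_j\in\cE$, $W_0+W_1=Q_0+Q_1=I$, $\tr[(\rho_0-\rho_1)W_0]=\tfrac12\|\rho_0-\rho_1\|_1$, and $-kQ_j^{T_B}\le W_j^{T_B}-Q_j^{T_B}\le kQ_j^{T_B}$ for $j=0,1$. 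The key point is that, with $k$ treated as a fixed constant rather than an optimization variable, each of these is a linear (matrix) equality or inequality in the operator unknowns — in particular $kQ_j^{T_B}$ is linear in $Q_j$ — so $\mathrm{F}(k)$ is a bona fide semidefinite feasibility program, whereas the coupled minimization over $k$ and $\boldsymbol{W}$ in Eq.~\eqref{Eq:ent_cost_dis} is not. By Proposition~\ref{prop:ent_cost_dis}, $E_C^{\PPT}(\rho_0,\rho_1)=\log k^\ast$, where $k^\ast$ is the least integer $k\ge 2$ for which $\mathrm{F}(k)$ is feasible (with $E_C^{\PPT}=0$ if the optimal discrimination is already attained by a PPT POVM); here I would note that the ceiling in Eq.~\eqref{Eq:ent_cost_dis} is precisely what turns ``$\min\lceil\mathscr{D}^{\cR}_\infty(\boldsymbol{W})\rceil$'' into ``smallest feasible integer $k$.''

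Second, I would establish the two structural facts that make a search over $k$ both correct and short. The feasibility of $\mathrm{F}(k)$ is monotone nondecreasing in $k$: if $(W_j,Q_j)$ is feasible at level $k$, then for any integer $k'\ge k$ the same tuple satisfies $-k'Q_j^{T_B}\le -kQ_j^{T_B}\le W_j^{T_B}-Q_j^{T_B}\le kQ_j^{T_B}\le k'Q_j^{T_B}$, using $Q_j^{T_B}\ge 0$, while all $k$-independent constraints are untouched; hence the indicator of feasibility is a single-threshold step function of $k$. Moreover, Corollary~\ref{cor:rank_Ec} gives the a priori bound $E_C^{\PPT}(\rho_0,\rho_1)\le\log\eta$, i.e.\ $k^\ast\le\eta$ with $\eta=\max\{2r_0-1,r_0+1\}$, so the unknown threshold $k^\ast$ lies in the finite integer range $\{2,\dots,\eta\}$.

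Finally, I would assemble the hierarchy: since the feasibility indicator is a monotone step function on an interval of at most $\eta$ integers, its threshold $k^\ast$ can be pinned down by binary search, each probe solving a single SDP $\mathrm{F}(k)$, in at most $\lceil\log\eta\rceil$ probes; returning $\log k^\ast$ then yields $E_C^{\PPT}(\rho_0,\rho_1)$ exactly. The substantive steps are the two in the previous paragraph — the observation that fixing $k$ linearizes the offending bilinear constraint (so that each level really is an SDP) and the monotonicity of feasibility that legitimizes binary search rather than a linear scan — together with invoking Corollary~\ref{cor:rank_Ec} to cap the range; the remaining bookkeeping (matching the ceiling to the least feasible integer, and the logarithmic step count of binary search over a range of size $\eta$) is routine.
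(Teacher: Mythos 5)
Your proposal is correct and rests on the same two pillars as the paper's proof: fixing the integer $k$ turns the bilinear constraint $-kQ_j^{T_B}\le W_j^{T_B}-Q_j^{T_B}\le kQ_j^{T_B}$ into a genuine SDP constraint, and Corollary~\ref{cor:rank_Ec} caps the relevant range of $k$ by $\eta=\max\{2r_0-1,r_0+1\}$. The differences are in how each level is posed and how the levels are traversed. The paper formulates level $k$ as a minimization $\Delta^{\PPT}(\rho_0,\rho_1;k)$ of the deviation $t$ between $\tr[(\rho_0-\rho_1)W_0]$ and $\tfrac12\|\rho_0-\rho_1\|_1$, and identifies $k_{\min}$ as the smallest $k$ with $\Delta^{\PPT}=0$; you instead pose a feasibility SDP $\mathrm{F}(k)$ with that trace condition as a hard equality. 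These are mathematically equivalent ($\Delta^{\PPT}(k)=0$ iff $\mathrm{F}(k)$ is feasible), though the paper's slack formulation is the more robust one to actually solve. Where your write-up genuinely adds something is the explicit monotonicity lemma (feasibility at $k$ implies feasibility at $k'\ge k$, using $Q_j^{T_B}\ge 0$) and the resulting binary search over $\{2,\dots,\eta\}$: read literally, the paper's upward scan for $k_{\min}$ uses up to $\eta$ SDP solves, and its claim of convergence ``in at most $\lceil\log(\eta)\rceil$ steps'' is only justified if one either counts steps in ebits ($\log k$) or searches as you do; your argument is the one that makes the stated step count hold as a count of SDP solves. No gaps.
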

\begin{proof}
Proposition~\ref{prop:ent_cost_dis} immediately implies a hierarchy of SDPs that can measure the disparity between the average success probability via global measurements and the entanglement-assisted average success probability of $\PPT$-discrimination as follows,
\begin{equation}
\begin{aligned}
     \Delta^{\PPT}(\rho_0, \rho_1; k) = \min\, & t \\
     \textrm{s.t. }& W_j,Q_j \in \cE,\, j=0,1,\\
     & W_0+W_1=Q_0+Q_1=I,\\
     & -t \leq \tr\left[ (\rho_0 - \rho_1) W_0\right] - \frac{1}{2}\| \rho_0 - \rho_1\|_1 \leq t,\\
     & -kQ_j^{T_B}\leq W_j^{T_B}-Q_j^{T_B} \leq  kQ_j^{T_B}, \forall j.\\
\end{aligned}
\end{equation}
where $k \in \mathbb{N}$ is fixed. It can be verified that this optimization problem is actually a valid SDP, since $k$ and $\frac{1}{2}\| \rho_0 - \rho_1\|_1$ are fixed scalars. It follows that $E_{C}^{\PPT}(\rho_0,\rho_1) = \log k_{\min}$, where $k_{\min}$ is the smallest integer $k$ such that $\Delta^{\PPT}(\rho_0,\rho_1 ; k) = 0$ as that is when the deviation of the entanglement-assisted average success probability of PPT-discrimination and average success probability of global POVMs is zero. Then, by Corollary~\ref{cor:rank_Ec}, the hierarchy converges in at most $\lceil\log(\eta) \rceil$ steps.
\end{proof}

\section{Examples}
{In this section, we explore the entanglement cost of PPT-discrimination for a pair of quantum states through several examples.}
We first consider that when two states are orthogonal mixed states, the optimal PPT-discrimination can be achieved with the assistance of $\log(d-1)$ ebits, which is less than standard quantum teleportation.
\begin{proposition}
For any two orthogonal $d$-dimensional mixed bipartite quantum states $\rho_0$ and $\rho_1$, 
\begin{equation}
    E_C^{\PPT}(\rho_0, \rho_1) \leq \log(d-1).
\end{equation}
\end{proposition}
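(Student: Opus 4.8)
The plan is to obtain the bound as a direct consequence of Corollary~\ref{cor:rank_Ec}, which already reduces the entanglement cost of optimal $\PPT$-discrimination to a simple function of the ranks $r_0=\rank\rho_0$ and $r_1=\rank\rho_1$. All I need to extract from the hypotheses is that orthogonality controls the \emph{sum} of the two ranks, while mixedness forces each rank to be at least two; everything else is bookkeeping.

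First I would unpack orthogonality. Two density operators $\rho_0,\rho_1$ are orthogonal exactly when $\tr(\rho_0\rho_1)=0$, and since both are positive semidefinite this is equivalent to $\supp(\rho_0)\perp\supp(\rho_1)$. Two orthogonal subspaces of the $d$-dimensional space $\cH_A\ox\cH_B$ satisfy $\dim\supp(\rho_0)+\dim\supp(\rho_1)\le d$, i.e.\ $r_0+r_1\le d$. Because the task has equal priors, $E^{\PPT}_{C}$ is symmetric under interchanging $\rho_0$ and $\rho_1$, so I may relabel to assume $r_0\le r_1$, matching the convention of Corollary~\ref{cor:rank_Ec}. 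Then $2r_0\le r_0+r_1\le d$, giving $r_0\le d/2$ and hence $2r_0-1\le d-1$.

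Next I would invoke that $\rho_0,\rho_1$ are \emph{mixed}, which means $r_0,r_1\ge 2$. With $r_0\ge 2$, the quantity $\eta=\max\{2r_0-1,\,r_0+1\}$ of Corollary~\ref{cor:rank_Ec} equals $2r_0-1$, since $2r_0-1\ge r_0+1$ precisely when $r_0\ge 2$. Combining this with the rank inequality and the monotonicity of $\log$ yields
\begin{equation}
E^{\PPT}_{C}(\rho_0,\rho_1)\;\le\;\log\eta\;=\;\log(2r_0-1)\;\le\;\log(d-1),
\end{equation}
which is the desired claim.

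I do not anticipate a genuine obstacle, as the analytic content is carried entirely by Corollary~\ref{cor:rank_Ec}; the real work is only in translating ``orthogonal'' into $r_0+r_1\le d$ and ``mixed'' into $r_0\ge 2$. The single point requiring care is the reading of ``$d$-dimensional'': the argument uses $d=\dim(\cH_A\ox\cH_B)$, so that orthogonality of supports delivers $r_0+r_1\le d$. One should also note explicitly that the mixedness hypothesis is what rules out the low-rank case $r_0=1$ (where $\eta=r_0+1=2$ would not be dominated by $2r_0-1$), so that the clean form $\eta=2r_0-1$ is legitimately the one being bounded.
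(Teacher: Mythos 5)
Your proposal is correct and follows essentially the same route as the paper: translate orthogonality into $r_0+r_1\le d$, take $r_0\le d/2$ after relabeling, and invoke Corollary~\ref{cor:rank_Ec}. You are in fact slightly more careful than the paper's own proof, which does not explicitly note that mixedness ($r_0\ge 2$) is what makes $\eta=2r_0-1$ the relevant branch of the maximum.
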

\begin{proof}
    Denote the rank of $\rho_0$, $\rho_1$ as $\rank \rho_0 = r_0$ and $\rank\rho_1 = r_1$, respectively.
    Since $\rho_0$ and $\rho_1$ are mutually orthogonal, we have $r_0 + r_1 \leq d$.  Suppose that $r_0 \leq r_1$, we have that $\max r_0 = d/2$. According to Corollary~\ref{cor:rank_Ec}, we can conclude that $E_C^{\PPT}(\rho_0,\rho_1) \leq \log(d-1)$.
\end{proof}

In the following, we shall show that when one of the states in $\{\rho_0,\rho_1\}$ is pure, a single Bell state suffices to fully unlock the quantum data hiding against PPT POVMs, i.e., the optimal PPT-discrimination can always be achieved with the assistance of 1 ebit.
\begin{proposition}\label{prop:pure_mixed_Ec}
    For bipartite quantum states $\rho_0$ and $\rho_1$, if one of them is a pure state, then 
    \begin{equation}
        E_C^{\PPT} (\rho_0, \rho_1) \leq 1.
    \end{equation}
\end{proposition}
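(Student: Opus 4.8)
The plan is to obtain this as an immediate consequence of Corollary~\ref{cor:rank_Ec}, using the single observation that a pure state has rank one, the smallest rank a density operator can have. The only preliminary point to settle is that $E_C^{\PPT}$ is symmetric in its two arguments at equal priors, so that it does not matter whether the pure state is labelled $\rho_0$ or $\rho_1$. This symmetry is transparent from Proposition~\ref{prop:ppt_ave_suc_prob}: the simultaneous swap $\rho_0\leftrightarrow\rho_1$, $W_0\leftrightarrow W_1$, $Q_0\leftrightarrow Q_1$ fixes both the objective $\tfrac12\tr(\rho_0 W_0+\rho_1 W_1)$ and every constraint, so $P_{\suc,e}^{\PPT}(\rho_0,\rho_1;k)=P_{\suc,e}^{\PPT}(\rho_1,\rho_0;k)$, while $P_{\suc}(\rho_0,\rho_1)$ is manifestly symmetric; hence $E_C^{\PPT}(\rho_0,\rho_1)=E_C^{\PPT}(\rho_1,\rho_0)$.

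With symmetry in hand I would assume without loss of generality that the pure state is the one of smaller rank, so that $\rank\rho_0=r_0=1\le r_1=\rank\rho_1$. Applying Corollary~\ref{cor:rank_Ec} with $r_0=1$ gives $\eta=\max\{2r_0-1,\,r_0+1\}=\max\{1,2\}=2$, and therefore $E_C^{\PPT}(\rho_0,\rho_1)\le\log\eta=\log 2=1$, which is exactly the claimed bound. Because the reduction to the corollary is essentially one line, the substantive content already resides in Theorem~\ref{thm:spec_upperbound} and its rank corollary, and there is no genuine obstacle in this route beyond noting the symmetry above.

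If instead a self-contained argument running directly through Theorem~\ref{thm:spec_upperbound} were desired, I would verify two facts. First, that $\rho_0-\rho_1=\proj{\psi}-\rho_1$ has at most one strictly positive eigenvalue: adding the single rank-one positive term $\proj{\psi}$ to the negative semidefinite operator $-\rho_1$ can push at most one eigenvalue above zero (Weyl's inequalities), so $M_+$ is a projection of rank at most one. Second, that for a rank-one projection $M_+=\proj{\phi}$ with Schmidt decomposition $\ket{\phi}=\sum_i\sqrt{\lambda_i}\,\ket{e_i f_i}$, the partial transpose $M_+^{T_B}$ has spectrum $\{\lambda_i\}\cup\{\pm\sqrt{\lambda_i\lambda_j}\}_{i<j}$ (together with zeros), all of which lies in $[-\tfrac12,1]$. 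Since $M_+^{T_B}-M_-^{T_B}=2M_+^{T_B}-I$, the second fact yields $\|M_+^{T_B}-M_-^{T_B}\|_\infty\le 2$, and Theorem~\ref{thm:spec_upperbound} then gives $E_C^{\PPT}\le\log\lceil 2\rceil=1$. In this alternative the only mildly delicate step is the eigenvalue computation for the partial transpose of a rank-one projector; everything else is routine.
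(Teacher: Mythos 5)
Your proposal is correct and follows essentially the same route as the paper, which also deduces the bound by directly applying Corollary~\ref{cor:rank_Ec} with $r_0=1$ so that $\eta=\max\{1,2\}=2$. Your additional observation that $E_C^{\PPT}$ is symmetric under swapping the two states (needed to justify labelling the pure state as the lower-rank one) is a small point of care the paper leaves implicit, and your alternative direct argument via Theorem~\ref{thm:spec_upperbound} is just an unpacking of the same corollary.
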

\begin{proof}
{By directly applying Corollary~\ref{cor:rank_Ec}, we have the optimal discrimination can be achieved with $E_C^{\PPT}(\rho_0, \rho_1) \leq 1$.}
\end{proof}

Proposition~\ref{prop:pure_mixed_Ec} establishes that any $d$-dimensional pure state can be optimally discriminated against any other state with the assistance of one Bell state. This result recovers and extends the previous result that one Bell state is enough for discriminating any $d$-dimensional pure state and its orthogonal complement~\cite{Yu2014}.

\begin{remark}
    We remark there is an intriguing connection between the entanglement cost of optimal PPT-discrimination and quantum data hiding~\cite{DiVincenzo_2002}. Quantum data hiding demonstrates the existence of bipartite state pairs optimally distinguishable by general entangled POVMs but not by locally constrained sets of measurements, e.g., LOCC, SEP, and PPT POVMs. This underscores a discernible gap in the efficacy of message decoding when encoded in pairs of states and the receiver is constrained to limited classes of POVMs. These can be understood by the data-hiding procedures depending on the given physical setting, e.g., two quantum devices for decoding information cannot generate reliable entanglement. For quantum data hiding associated with PPT POVMs, the data hiding ratio~\cite{Lami_2018} against PPT POVMs is given by,
    \begin{align}
        R(\PPT) = \max_{\{\rho_0, \rho_1, p\}} \frac{\| p\rho_0-(1-p)\rho_1 \|_{\text{ALL}}}{ \| p\rho_0-(1-p)\rho_1 \|_{\PPT}},
    \end{align}
    which quantifies the disparity between the capabilities of global POVMs and PPT POVMs in QSD. We showcase that the strategic use of maximally entangled states in QSD tasks serves as a key to unveiling these distinctions in message decoding capabilities between general entangled measurements and PPT POVMs. In particular, 
    Ref.~\cite{Matthews_2009} establishes a lower bound for $R(\PPT)$ with $R(\PPT) \geq (d+1)/2$ considering a bipartite system $\cH_d \otimes \cH_d$.
\end{remark}

Intriguingly, we observe that a single Bell state is not always enough for optimally discriminating an arbitrary pair of quantum states by LOCC operations or PPT POVMs.
\begin{proposition}\label{prop:LOCC_eg}
    There exists a pair of bipartite states $\{\rho, \sigma\}$ for which the optimal discrimination concerning general entangled POVMs is not achievable by LOCC with the assistance of a single Bell state.
\end{proposition}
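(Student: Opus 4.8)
The plan is to reduce the LOCC statement to a PPT statement, and then exhibit an explicit data-hiding pair for which one Bell state provably fails. First I would use the inclusion $\LOCC\subseteq\PPT$ from Eq.~\eqref{Eq:povm_hire}, which gives $P_{\suc,e}^{\LOCC}(\rho,\sigma;k)\le P_{\suc,e}^{\PPT}(\rho,\sigma;k)$ for every $k$. Hence it suffices to produce a pair $\{\rho,\sigma\}$ with $E_C^{\PPT}(\rho,\sigma)>1$, equivalently with $k_{\min}\ge 3$ in the notation of Theorem~\ref{thm:SDP_hierarchy}: for such a pair $P_{\suc,e}^{\PPT}(\rho,\sigma;2)<P_{\suc}(\rho,\sigma)$, and therefore $P_{\suc,e}^{\LOCC}(\rho,\sigma;2)<P_{\suc}(\rho,\sigma)$, which is exactly the asserted failure.

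For the construction I would take canonical data-hiding states on $\CC^d\ox\CC^d$, namely the normalized projectors onto the symmetric and antisymmetric subspaces,
\[
\rho=\frac{2}{d(d+1)}\,P_{+},\qquad \sigma=\frac{2}{d(d-1)}\,P_{-}.
\]
These are orthogonal, so $P_{\suc}(\rho,\sigma)=1$, yet they are notoriously hard to distinguish under $\PPT$: the data-hiding ratio obeys $R(\PPT)\ge (d+1)/2$ by Ref.~\cite{Matthews_2009}, so the unassisted PPT success probability is bounded away from $1$ for large $d$. The task then reduces to showing that appending a single Bell state does not close this gap.

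Using the SDP characterization of Proposition~\ref{prop:ppt_ave_suc_prob}, the assisted value $P_{\suc,e}^{\PPT}(\rho,\sigma;2)$ is governed by the PPT-distinguishability of $(\rho-\sigma)\ox\Phi_2^{+}$. I would bound the improvement that a single qubit Bell pair can contribute and compare it against the $\Omega(d)$ data-hiding gap, concluding that for all sufficiently large $d$ one still has $P_{\suc,e}^{\PPT}(\rho,\sigma;2)<1=P_{\suc}(\rho,\sigma)$. Equivalently, via Proposition~\ref{prop:ent_cost_dis}, one verifies that every optimal global POVM $\boldsymbol{W}$ satisfies $\mathscr{D}^{\cR}_{\infty}(\boldsymbol{W})>2$, forcing $k_{\min}\ge 3$.

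The main obstacle is precisely this strict separation: proving $E_C^{\PPT}>1$ means certifying that one Bell state is genuinely insufficient, which is \emph{not} delivered by the upper bounds of Theorem~\ref{thm:spec_upperbound} and Corollary~\ref{cor:rank_Ec}. The cleanest rigorous route is to produce a dual-feasible certificate for the program $\Delta^{\PPT}(\rho,\sigma;2)>0$ of Theorem~\ref{thm:SDP_hierarchy}; since this is a genuine SDP with certified output, it also suffices to fix one explicit small instance (for example in $\CC^3\ox\CC^3$ or $\CC^4\ox\CC^4$) and exhibit $\Delta^{\PPT}(\rho,\sigma;2)>0$, which immediately yields the desired pair. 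I expect the quantitative comparison between the bounded gain from a single Bell pair and the data-hiding ratio to be the delicate step in the analytic version, whereas the numerical instance sidesteps it while remaining fully rigorous.
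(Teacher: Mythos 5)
Your high-level reduction --- use $\LOCC\subseteq\PPT$ so that it suffices to exhibit a pair with $P_{\suc,e}^{\PPT}(\rho,\sigma;2)<P_{\suc}(\rho,\sigma)$ --- is exactly the reduction the paper uses. Where you diverge is in how the witness pair is produced and certified. The paper writes down an explicit two-qutrit pair (a rank-two $\rho$ built from two pure states, and a numerically specified $9\times 9$ matrix $\sigma$), solves the dual of the SDP in Eq.~\eqref{Eq:suc_prob_k} and the primal of Eq.~\eqref{Eq:suc_prob}, and converts the numerics into a computer-assisted certificate: a dual-feasible point gives $P_{\suc,e}^{\PPT}(\rho,\sigma;2)\le 0.9125$ while a primal-feasible point gives $P_{\suc}(\rho,\sigma)\ge 0.9135$. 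Your fallback route is essentially this, but you have not exhibited an instance, and it is not clear that your preferred candidates (the symmetric/antisymmetric states in $\CC^3\ox\CC^3$ or $\CC^4\ox\CC^4$) actually satisfy $E_C^{\PPT}>1$; for such small $d$ one Bell state may well suffice, so an instance still has to be found and certified, which is precisely the content of the paper's appendix.

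The analytic version for large $d$ has a genuine gap at the step you yourself flag: ``bound the improvement that a single Bell pair can contribute.'' This does not follow from the data-hiding ratio, since $R(\PPT)$ controls only the \emph{unassisted} PPT norm. The gap can, however, be closed with the paper's own Proposition~\ref{prop:ppt_ave_suc_prob}: its constraints force $Q_0^{T_B}\ge 0$ and $Q_0^{T_B}\le I$, hence $\|W_0^{T_B}\|_\infty\le k+1$, so by H\"older's inequality
\begin{equation*}
\tr[(\rho-\sigma)W_0]=\tr\big[(\rho-\sigma)^{T_B}W_0^{T_B}\big]\le (k+1)\,\big\|(\rho-\sigma)^{T_B}\big\|_1=\frac{4(k+1)}{d}
\end{equation*}
for the symmetric/antisymmetric pair, which is strictly less than $\tfrac{1}{2}\|\rho-\sigma\|_1=1$ for $k=2$ once $d\ge 13$; this certifies $P_{\suc,e}^{\PPT}(\rho,\sigma;2)<1=P_{\suc}(\rho,\sigma)$. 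With that lemma supplied, your route is correct and in some ways more informative than the paper's (it yields an infinite family and shows the required ancilla dimension must grow with $d$), at the price of working only for large $d$. As written, though, the decisive inequality is asserted rather than proved, so the proposal is not yet a complete proof.
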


This phenomenon is illustrated through a numerical example provided in Appendix~\ref{appendix:qutr_eg} and the numerical codes for estimating the required entanglement bits for quantum state discrimination can be found in~\cite{ourcode}, wherein we showcase a pair of two-qutrit quantum states that, even with the aid of a single Bell state, cannot be optimally discriminated by PPT POVMs.
While one Bell state is sufficient for any pure state and its orthogonal complement, it is not a universally sufficient solution for all state pairs. This example underscores a more sophisticated gap between the optimal general entangled POVMs and LOCC operations in the context of quantum state discrimination for mixed quantum states. 

\begin{figure}[t]
    \centering
    \subfloat{{\includegraphics[width=0.498\linewidth]{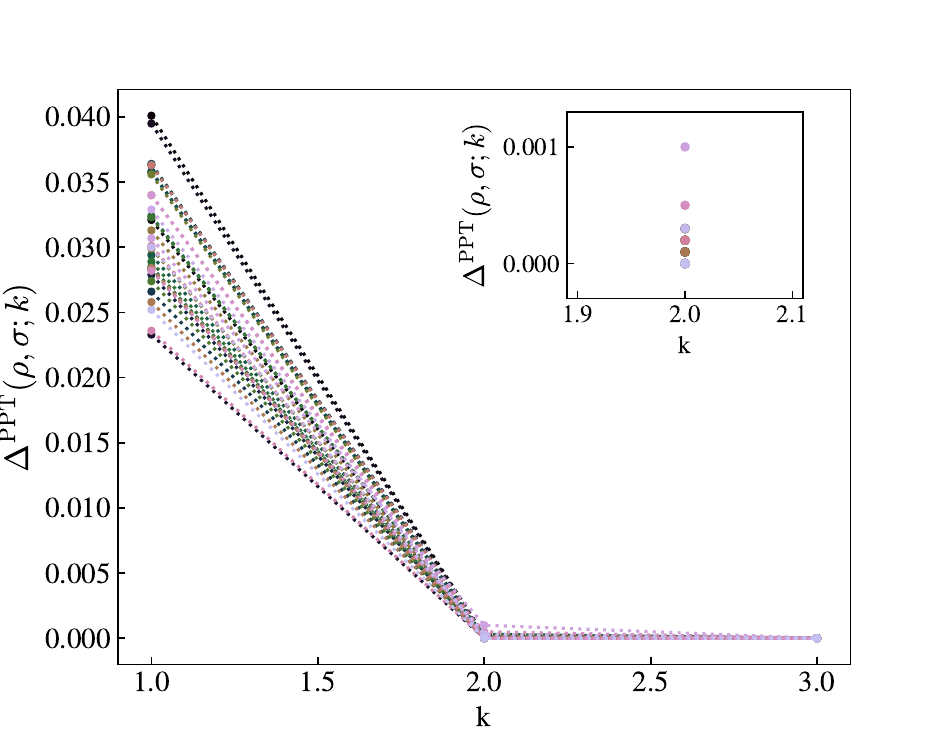} }}%
    \subfloat{{\includegraphics[width=0.498\linewidth]{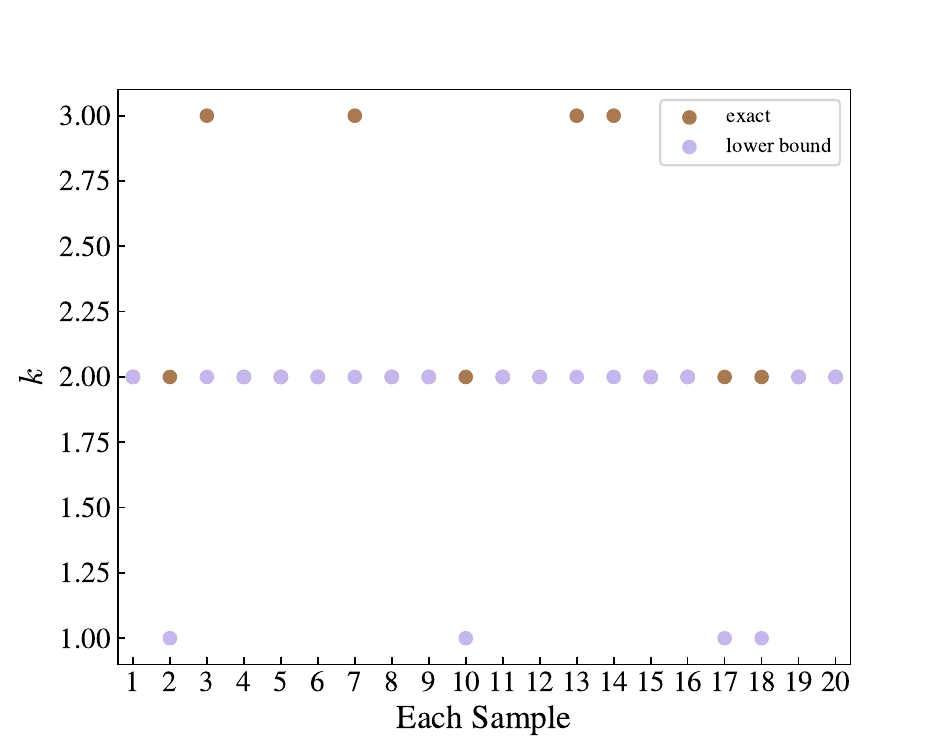} }}%
    \caption{(Left) Deviation between the entanglement-assisted average success probability of PPT discrimination and the average success probability of global POVMs for randomly sampled quantum states. The zoomed-in section illustrates that one ebit is not always sufficient for discriminating between two arbitrary mixed states, as evidenced by the disparity between the average success probabilities achieved through PPT POVMs and those achieved through global POVMs. (Right) Comparison of lower bound and exact value determined by SDP Hierarchy for entanglement cost in optimal PPT-discrimination. The brown points indicate the exact values and the purple points represent the lower bounds. It can show that the lower bound is tight in most cases.}%
    \label{fig:sdp_hierarchy_varies}%
\end{figure}

{Furthermore, we numerically tested the SDP hierarchy and the tightness of the lower bound of the entanglement cost of optimal PPT-discrimination. We first present additional numerical results to evaluate the effectiveness of this hierarchy in Theorem~\ref{thm:SDP_hierarchy} and examine the tightness of the upper bounds provided. Specifically, we analyze 35 pairs of qudit quantum states $\rho, \sigma \in \cD(\cH_4 \otimes \cH_4)$. $\rho$ is sampled as rank-3 states using Hilbert-Schmidt measures~\cite{zyczkowski2011generating}, while the rank of $\sigma$ is sampled randomly. For each pair of quantum states, we run the SDP hierarchy with fixed levels $k \in [1,2,3]$ and obtain the corresponding deviation between the entanglement-assisted average success probability of PPT discrimination and the average success probability of global POVMs. At $k=1$, the setting represents scenarios without entanglement assistance. We plot the variations in the entanglement-assisted average success probability of PPT discrimination across different $k$ values in Figure~\ref{fig:sdp_hierarchy_varies}. The results indicate that a single Bell state suffices for most pairs of quantum states, although there are instances where $k=3$ is necessary. These findings are consistent with the theoretical results from Theorem~\ref{thm:SDP_hierarchy} that the SDP hierarchy converges within at most $\lceil\eta\rceil$ steps, demonstrating alignment between our numerical simulations and the theoretical framework.}

{
Furthermore, Proposition~\ref{prop:sdp_low_bound_appendix} establishes a lower bound, which we demonstrate to be tight in several instances. Specifically, we have randomly sampled an additional 20 pairs of quantum states to illustrate the tightness of this lower bound. This sampling follows the same procedure. The results are displayed in Figure~\ref{fig:sdp_hierarchy_varies}, where the brown points indicate the exact values and the purple points represent the lower bounds. These numerical experiments suggest that the lower bound is indeed tight in most cases.
}

\section{Concluding Remarks}
In this work, we delve into quantifying the entanglement cost for optimally discriminating bipartite quantum states under locality constraints. 
{We introduce the spectral PPT-distance and the relative spectral PPT-distance of a POVM and relate these quantities to the operational task of determining the entanglement cost of optimal discrimination by PPT POVMs. Furthermore, we provide an easy-to-compute upper bound based on the spectrum properties of the given states and apply it to establish an SDP hierarchy for determining the precise amount of entanglement necessary to fully unlock quantum data hiding across various pairs of quantum states.} Specifically, our findings notably emphasize that $\log(d-1)$ ebits suffice to unlock the quantum data hiding against PPT POVMs on orthogonal mixed quantum states. This result shows the efficiency of ebits for discriminating quantum states compared to teleportation. 
We also extend the previous results on discriminating a pure state against another one and show that if the data encoding involves a pure state, a single Bell state is always sufficient. For two mixed states, we present an example of a pair of states that cannot be optimally discriminated via LOCC with the assistance of one Bell state.

An intriguing direction for future research lies in further analysis of quantum resource manipulation for state discrimination, e.g., within the framework of resource theories of coherence and magic states. Specifically, the study of constrained families of POVMs, such as incoherent POVMs~\cite{Oszmaniec_2019} and those with positive Wigner functions~\cite{zhu2023limitations}, presents a promising avenue. Additionally, the exploration of resource manipulation in quantum state discrimination, with respect to thermodynamics~\cite{Gour2015nonequilibrium, chiribella2022nonequilibrium} and imaginarity~\cite{Wu2021Imaginarity}, could pave the way for novel insights into the interplay between quantum resources and measurement processes.

\section*{Acknowledgement}
{We thank anonymous reviewers for their helpful comments and for suggesting the SDP hierarchy of calculating the entanglement cost of PPT-discrimination.} We would like to thank Hongshun Yao, Benchi Zhao, and Kun Wang for their helpful comments. Chenghong Zhu and Chengkai Zhu contributed equally to this work. 
This work was partially supported by the National Key R\&D Program of China (Grant No.~2024YFE0102500), the Guangdong Provincial Quantum Science Strategic Initiative (Grant No.~GDZX2403008, GDZX2403001), the Guangdong Provincial Key Lab of Integrated Communication, Sensing and Computation for Ubiquitous Internet of Things (Grant No.~2023B1212010007), the Quantum Science Center of Guangdong-Hong Kong-Macao Greater Bay Area, and the Education Bureau of Guangzhou Municipality.

\bibliographystyle{IEEEtran}
\bibliography{arxiv/arxiv}

\begin{thebibliography}{10}
\providecommand{\url}[1]{#1}
\csname url@samestyle\endcsname
\providecommand{\newblock}{\relax}
\providecommand{\bibinfo}[2]{#2}
\providecommand{\BIBentrySTDinterwordspacing}{\spaceskip=0pt\relax}
\providecommand{\BIBentryALTinterwordstretchfactor}{4}
\providecommand{\BIBentryALTinterwordspacing}{\spaceskip=\fontdimen2\font plus
\BIBentryALTinterwordstretchfactor\fontdimen3\font minus \fontdimen4\font\relax}
\providecommand{\BIBforeignlanguage}[2]{{%
\expandafter\ifx\csname l@#1\endcsname\relax
\typeout{** WARNING: IEEEtran.bst: No hyphenation pattern has been}%
\typeout{** loaded for the language `#1'. Using the pattern for}%
\typeout{** the default language instead.}%
\else
\language=\csname l@#1\endcsname
\fi
#2}}
\providecommand{\BIBdecl}{\relax}
\BIBdecl

\bibitem{Bennett1999b}
\BIBentryALTinterwordspacing
C.~H. Bennett, D.~P. DiVincenzo, C.~A. Fuchs, T.~Mor, E.~Rains, P.~W. Shor, J.~A. Smolin, and W.~K. Wootters, ``{Quantum nonlocality without entanglement},'' \emph{Physical Review A}, vol.~59, no.~2, pp. 1070--1091, feb 1999. [Online]. Available: \url{https://link.aps.org/doi/10.1103/PhysRevA.59.1070}
\BIBentrySTDinterwordspacing

\bibitem{Leung_2021}
\BIBentryALTinterwordspacing
D.~Leung, A.~Winter, and N.~Yu, ``{LOCC} protocols with bounded width per round optimize convex functions,'' \emph{Reviews in Mathematical Physics}, vol.~33, no.~05, p. 2150013, jan 2021. [Online]. Available: \url{https://doi.org/10.1142%2Fs0129055x21500136}
\BIBentrySTDinterwordspacing

\bibitem{Bandyopadhyay2015}
\BIBentryALTinterwordspacing
S.~Bandyopadhyay, S.~Halder, and M.~Nathanson, ``Entanglement as a resource for local state discrimination in multipartite systems,'' \emph{Physical Review A}, vol.~94, no.~2, Aug. 2016. [Online]. Available: \url{http://dx.doi.org/10.1103/PhysRevA.94.022311}
\BIBentrySTDinterwordspacing

\bibitem{Childs2013}
A.~M. Childs, D.~Leung, L.~Man{\v{c}}inska, and M.~Ozols, ``{A framework for bounding nonlocality of state discrimination},'' \emph{Communications in Mathematical Physics}, vol. 323, no.~3, pp. 1121--1153, 2013.

\bibitem{Bandyopadhyay2011a}
S.~Bandyopadhyay, ``{More nonlocality with less purity},'' \emph{Physical Review Letters}, vol. 106, no.~21, p. 210402, 2011.

\bibitem{Calsamiglia2010}
J.~Calsamiglia, J.~I. {De Vicente}, R.~Mu{\~{n}}oz-Tapia, and E.~Bagan, ``{Local discrimination of mixed states},'' \emph{Physical Review Letters}, vol. 105, no.~8, pp. 1--4, 2010.

\bibitem{Halder2019}
\BIBentryALTinterwordspacing
S.~Halder, M.~Banik, S.~Agrawal, and S.~Bandyopadhyay, ``{Strong Quantum Nonlocality without Entanglement},'' \emph{Physical Review Letters}, vol. 122, no.~4, p. 40403, 2019. [Online]. Available: \url{https://doi.org/10.1103/PhysRevLett.122.040403}
\BIBentrySTDinterwordspacing

\bibitem{Walgate2000}
J.~Walgate, A.~J. Short, L.~Hardy, and V.~Vedral, ``{Local distinguishability of multipartite orthogonal quantum states},'' \emph{Physical Review Letters}, vol.~85, no.~23, p. 4972, 2000.

\bibitem{chitambar2013local}
E.~Chitambar, R.~Duan, and M.-H. Hsieh, ``When do local operations and classical communication suffice for two-qubit state discrimination?'' \emph{IEEE Transactions on Information Theory}, vol.~60, no.~3, pp. 1549--1561, 2013.

\bibitem{chitambar2013revisiting}
E.~Chitambar and M.-H. Hsieh, ``Revisiting the optimal detection of quantum information,'' \emph{Physical Review A}, vol.~88, no.~2, p. 020302, 2013.

\bibitem{Gisin_2002}
\BIBentryALTinterwordspacing
N.~Gisin, G.~Ribordy, W.~Tittel, and H.~Zbinden, ``Quantum cryptography,'' \emph{Reviews of Modern Physics}, vol.~74, no.~1, pp. 145--195, mar 2002. [Online]. Available: \url{https://doi.org/10.1103%2Frevmodphys.74.145}
\BIBentrySTDinterwordspacing

\bibitem{Cleve_1999}
\BIBentryALTinterwordspacing
R.~Cleve, D.~Gottesman, and H.-K. Lo, ``How to share a quantum secret,'' \emph{Physical Review Letters}, vol.~83, no.~3, pp. 648--651, jul 1999. [Online]. Available: \url{https://doi.org/10.1103%2Fphysrevlett.83.648}
\BIBentrySTDinterwordspacing

\bibitem{Leverrier_2009}
\BIBentryALTinterwordspacing
A.~Leverrier and P.~Grangier, ``Unconditional security proof of long-distance continuous-variable quantum key distribution with discrete modulation,'' \emph{Physical Review Letters}, vol. 102, no.~18, may 2009. [Online]. Available: \url{https://doi.org/10.1103%2Fphysrevlett.102.180504}
\BIBentrySTDinterwordspacing

\bibitem{Brunner_2013}
\BIBentryALTinterwordspacing
N.~Brunner, M.~Navascu{\'{e} }s, and T.~V{\'{e}}rtesi, ``Dimension witnesses and quantum state discrimination,'' \emph{Physical Review Letters}, vol. 110, no.~15, apr 2013. [Online]. Available: \url{https://doi.org/10.1103%2Fphysrevlett.110.150501}
\BIBentrySTDinterwordspacing

\bibitem{Hendrych_2012}
\BIBentryALTinterwordspacing
M.~Hendrych, R.~Gallego, M.~Mi{\v{c}}uda, N.~Brunner, A.~Ac{\'{\i}}n, and J.~P. Torres, ``Experimental estimation of the dimension of classical and quantum systems,'' \emph{Nature Physics}, vol.~8, no.~8, pp. 588--591, jun 2012. [Online]. Available: \url{https://doi.org/10.1038%2Fnphys2334}
\BIBentrySTDinterwordspacing

\bibitem{Terhal2001}
\BIBentryALTinterwordspacing
B.~M. Terhal, D.~P. DiVincenzo, and D.~W. Leung, ``Hiding bits in bell states,'' \emph{Phys. Rev. Lett.}, vol.~86, pp. 5807--5810, Jun 2001. [Online]. Available: \url{https://link.aps.org/doi/10.1103/PhysRevLett.86.5807}
\BIBentrySTDinterwordspacing

\bibitem{Eggeling2002}
\BIBentryALTinterwordspacing
T.~Eggeling and R.~F. Werner, ``Hiding classical data in multipartite quantum states,'' \emph{Phys. Rev. Lett.}, vol.~89, p. 097905, Aug 2002. [Online]. Available: \url{https://link.aps.org/doi/10.1103/PhysRevLett.89.097905}
\BIBentrySTDinterwordspacing

\bibitem{Matthews_2009}
\BIBentryALTinterwordspacing
W.~Matthews, S.~Wehner, and A.~Winter, ``Distinguishability of quantum states under restricted families of measurements with an application to quantum data hiding,'' \emph{Communications in Mathematical Physics}, vol. 291, no.~3, pp. 813--843, aug 2009. [Online]. Available: \url{https://doi.org/10.1007%2Fs00220-009-0890-5}
\BIBentrySTDinterwordspacing

\bibitem{Bandyopadhyay2015LimSep}
S.~Bandyopadhyay, A.~Cosentino, N.~Johnston, V.~Russo, J.~Watrous, and N.~Yu, ``Limitations on separable measurements by convex optimization,'' \emph{IEEE Transactions on Information Theory}, vol.~61, no.~6, pp. 3593--3604, 2015.

\bibitem{chitambar2009nonlocal}
E.~Chitambar and R.~Duan, ``Nonlocal entanglement transformations achievable by separable operations,'' \emph{Physical review letters}, vol. 103, no.~11, p. 110502, 2009.

\bibitem{Yu2014}
N.~Yu, R.~Duan, and M.~Ying, ``{Distinguishability of quantum states by positive operator-valued measures with positive partial transpose},'' \emph{IEEE Transactions on Information Theory}, vol.~60, no.~4, pp. 2069--2079, 2014.

\bibitem{li2017indistinguishability}
Y.~Li, X.~Wang, and R.~Duan, ``Indistinguishability of bipartite states by positive-partial-transpose operations in the many-copy scenario,'' \emph{Physical Review A}, vol.~95, no.~5, p. 052346, 2017.

\bibitem{cheng2023discrimination}
H.-C. Cheng, A.~Winter, and N.~Yu, ``Discrimination of quantum states under locality constraints in the many-copy setting,'' \emph{Communications in Mathematical Physics}, pp. 1--33, 2023.

\bibitem{boyd2004convex}
S.~P. Boyd and L.~Vandenberghe, \emph{Convex optimization}.\hskip 1em plus 0.5em minus 0.4em\relax Cambridge university press, 2004.

\bibitem{Takagi_2019}
\BIBentryALTinterwordspacing
R.~Takagi and B.~Regula, ``General resource theories in quantum mechanics and beyond: Operational characterization via discrimination tasks,'' \emph{Physical Review X}, vol.~9, no.~3, sep 2019. [Online]. Available: \url{https://doi.org/10.1103%2Fphysrevx.9.031053}
\BIBentrySTDinterwordspacing

\bibitem{Lami_2018}
\BIBentryALTinterwordspacing
L.~Lami, C.~Palazuelos, and A.~Winter, ``Ultimate data hiding in quantum mechanics and beyond,'' \emph{Communications in Mathematical Physics}, vol. 361, no.~2, pp. 661--708, jun 2018. [Online]. Available: \url{https://doi.org/10.1007%2Fs00220-018-3154-4}
\BIBentrySTDinterwordspacing

\bibitem{Cohen2008}
\BIBentryALTinterwordspacing
S.~M. Cohen, ``Understanding entanglement as resource: Locally distinguishing unextendible product bases,'' \emph{Phys. Rev. A}, vol.~77, p. 012304, Jan 2008. [Online]. Available: \url{https://link.aps.org/doi/10.1103/PhysRevA.77.012304}
\BIBentrySTDinterwordspacing

\bibitem{Ghosh2001Bellbasis}
S.~Ghosh, G.~Kar, A.~Roy, A.~Sen(De), and U.~Sen, ``Distinguishability of bell states,'' \emph{Phys. Rev. Lett.}, vol.~87, p. 277902, Dec 2001.

\bibitem{Bandyopadhyay_2021}
\BIBentryALTinterwordspacing
S.~Bandyopadhyay and V.~Russo, ``Entanglement cost of discriminating noisy bell states by local operations and classical communication,'' \emph{Physical Review A}, vol. 104, no.~3, Sep. 2021. [Online]. Available: \url{http://dx.doi.org/10.1103/PhysRevA.104.032429}
\BIBentrySTDinterwordspacing

\bibitem{helstrom1969quantum}
C.~W. Helstrom, ``Quantum detection and estimation theory,'' \emph{Journal of Statistical Physics}, vol.~1, pp. 231--252, 1969.

\bibitem{Chitambar_2014}
\BIBentryALTinterwordspacing
E.~Chitambar, D.~Leung, L.~Mančinska, M.~Ozols, and A.~Winter, ``Everything you always wanted to know about locc (but were afraid to ask),'' \emph{Communications in Mathematical Physics}, vol. 328, no.~1, p. 303–326, Mar. 2014. [Online]. Available: \url{http://dx.doi.org/10.1007/s00220-014-1953-9}
\BIBentrySTDinterwordspacing

\bibitem{Matthews2009}
W.~Matthews, S.~Wehner, and A.~Winter, ``{Distinguishability of quantum states under restricted families of measurements with an application to quantum data hiding},'' \emph{Communications in Mathematical Physics}, vol. 291, no.~3, pp. 813--843, 2009.

\bibitem{AudenaertREPPT}
\BIBentryALTinterwordspacing
K.~Audenaert, B.~De~Moor, K.~G.~H. Vollbrecht, and R.~F. Werner, ``Asymptotic relative entropy of entanglement for orthogonally invariant states,'' \emph{Phys. Rev. A}, vol.~66, p. 032310, Sep 2002. [Online]. Available: \url{https://link.aps.org/doi/10.1103/PhysRevA.66.032310}
\BIBentrySTDinterwordspacing

\bibitem{Hayashi2006a}
M.~Hayashi, \emph{Quantum information theory}.\hskip 1em plus 0.5em minus 0.4em\relax Springer, 2016.

\bibitem{Plenio2007}
M.~B. Plenio and S.~S. Virmani, ``{An introduction to entanglement measures},'' \emph{Quantum Information and Computation}, vol.~7, no.~1, pp. 1--51, 2007.

\bibitem{Gungor2016}
\BIBentryALTinterwordspacing
{\"O}.~G{\"u}ng{\"o}r and S.~Turgut, ``Entanglement-assisted state discrimination and entanglement preservation,'' \emph{Physical Review A}, vol.~94, no.~3, Sep. 2016. [Online]. Available: \url{http://dx.doi.org/10.1103/PhysRevA.94.032330}
\BIBentrySTDinterwordspacing

\bibitem{Bandyopadhyay2018}
\BIBentryALTinterwordspacing
S.~Bandyopadhyay, S.~Halder, and M.~Nathanson, ``Optimal resource states for local state discrimination,'' \emph{Physical Review A}, vol.~97, no.~2, Feb. 2018. [Online]. Available: \url{http://dx.doi.org/10.1103/PhysRevA.97.022314}
\BIBentrySTDinterwordspacing

\bibitem{woit2017quantum}
P.~Woit, Woit, and Bartolini, \emph{Quantum theory, groups and representations}.\hskip 1em plus 0.5em minus 0.4em\relax Springer, 2017, vol.~4.

\bibitem{DiVincenzo_2002}
\BIBentryALTinterwordspacing
D.~DiVincenzo, D.~Leung, and B.~Terhal, ``Quantum data hiding,'' \emph{{IEEE} Transactions on Information Theory}, vol.~48, no.~3, pp. 580--598, mar 2002. [Online]. Available: \url{https://doi.org/10.1109%2F18.985948}
\BIBentrySTDinterwordspacing

\bibitem{ourcode}
\BIBentryALTinterwordspacing
{Codes for estimating the required entanglement for quantum state discrimination}, ``https://github.com/chenghongz/ent\_cost\_qsd.'' [Online]. Available: \url{https://github.com/chenghongz/ent_cost_QSD}
\BIBentrySTDinterwordspacing

\bibitem{zyczkowski2011generating}
K.~{\.Z}yczkowski, K.~A. Penson, I.~Nechita, and B.~Collins, ``Generating random density matrices,'' \emph{Journal of Mathematical Physics}, vol.~52, no.~6, 2011.

\bibitem{Oszmaniec_2019}
\BIBentryALTinterwordspacing
M.~Oszmaniec and T.~Biswas, ``Operational relevance of resource theories of quantum measurements,'' \emph{Quantum}, vol.~3, p. 133, apr 2019. [Online]. Available: \url{https://doi.org/10.22331%2Fq-2019-04-26-133}
\BIBentrySTDinterwordspacing

\bibitem{zhu2023limitations}
C.~Zhu, Z.~Liu, C.~Zhu, and X.~Wang, ``Limitations of classically-simulable measurements for quantum state discrimination,'' 2023.

\bibitem{Gour2015nonequilibrium}
\BIBentryALTinterwordspacing
G.~Gour, M.~P. Müller, V.~Narasimhachar, R.~W. Spekkens, and N.~Yunger~Halpern, ``The resource theory of informational nonequilibrium in thermodynamics,'' \emph{Physics Reports}, vol. 583, p. 1–58, Jul. 2015. [Online]. Available: \url{http://dx.doi.org/10.1016/j.physrep.2015.04.003}
\BIBentrySTDinterwordspacing

\bibitem{chiribella2022nonequilibrium}
G.~Chiribella, F.~Meng, R.~Renner, and M.-H. Yung, ``The nonequilibrium cost of accurate information processing,'' \emph{Nature Communications}, vol.~13, no.~1, p. 7155, 2022.

\bibitem{Wu2021Imaginarity}
\BIBentryALTinterwordspacing
K.-D. Wu, T.~V. Kondra, S.~Rana, C.~M. Scandolo, G.-Y. Xiang, C.-F. Li, G.-C. Guo, and A.~Streltsov, ``Resource theory of imaginarity: Quantification and state conversion,'' \emph{Physical Review A}, vol. 103, no.~3, Mar. 2021. [Online]. Available: \url{http://dx.doi.org/10.1103/PhysRevA.103.032401}
\BIBentrySTDinterwordspacing

\bibitem{horn2012matrix}
R.~A. Horn and C.~R. Johnson, \emph{Matrix analysis}.\hskip 1em plus 0.5em minus 0.4em\relax Cambridge university press, 2012.

\bibitem{Rana2013}
\BIBentryALTinterwordspacing
S.~Rana, ``Negative eigenvalues of partial transposition of arbitrary bipartite states,'' \emph{Physical Review A}, vol.~87, no.~5, may 2013. [Online]. Available: \url{https://doi.org/10.1103%2Fphysreva.87.054301}
\BIBentrySTDinterwordspacing

\bibitem{cvx}
M.~Grant and S.~Boyd, ``{CVX}: Matlab software for disciplined convex programming, version 2.1,'' \url{http://cvxr.com/cvx}, Mar. 2014.

\bibitem{gb08}
------, ``Graph implementations for nonsmooth convex programs,'' in \emph{Recent Advances in Learning and Control}, ser. Lecture Notes in Control and Information Sciences, V.~Blondel, S.~Boyd, and H.~Kimura, Eds.\hskip 1em plus 0.5em minus 0.4em\relax Springer-Verlag Limited, 2008, pp. 95--110, \url{http://stanford.edu/~boyd/graph_dcp.html}.

\bibitem{MATLAB}
\BIBentryALTinterwordspacing
T.~M. Inc., ``Matlab version: 9.13.0 (r2022b),'' Natick, Massachusetts, United States, 2022. [Online]. Available: \url{https://www.mathworks.com}
\BIBentrySTDinterwordspacing

\bibitem{Bavaresco_2021}
\BIBentryALTinterwordspacing
J.~Bavaresco, M.~Murao, and M.~T. Quintino, ``Strict hierarchy between parallel, sequential, and indefinite-causal-order strategies for channel discrimination,'' \emph{Physical Review Letters}, vol. 127, no.~20, Nov. 2021. [Online]. Available: \url{http://dx.doi.org/10.1103/PhysRevLett.127.200504}
\BIBentrySTDinterwordspacing

\end{thebibliography}

\appendix

\setcounter{subsection}{0}
\setcounter{table}{0}
\setcounter{figure}{0}

\renewcommand{\theequation}{S\arabic{equation}}
\numberwithin{equation}{section}
\renewcommand{\theproposition}{S\arabic{proposition}}
\renewcommand{\thedefinition}{S\arabic{definition}}
\renewcommand{\thefigure}{S\arabic{figure}}
\setcounter{equation}{0}
\setcounter{table}{0}
\setcounter{section}{0}
\setcounter{proposition}{0}
\setcounter{definition}{0}
\setcounter{figure}{0}

\section{Dual of Entanglement-Assisted Average Success Probability}\label{appendix:dual_EAASP}
For simplicity, we ignore the system symbols, i.e. $W_{AB} = W$. By introducing the Lagrange multiplier $A$, $B$, $C$, $D$, $F$,$G$, the Lagrange function of the primal problem is
\begin{equation}
\begin{aligned}
    &L(A, B, C, D, F, G) \\
    &= \frac{1}{2} + \frac{1}{2}\tr[(\rho_0 - \rho_1)W] - \langle A, W-I\rangle - \langle B, Q-I\rangle \\
    &\quad - \langle C, (1-k)Q^{T_B} - W^{T_B}\rangle -\langle D, W^{T_B} - (1+k)Q^{T_B}\rangle \\ 
    &\quad - \langle F, (1+k)Q^{T_B} - kI - W^{T_B} \rangle - \langle G, W^{T_B} - kI + (k-1)Q^{T_B} \rangle \\
    &= \frac{1}{2} + \frac{1}{2}\tr[(\rho_0 - \rho_1)W] - \langle A, W\rangle + \langle A\rangle - \langle B, Q\rangle + \langle B\rangle \\ 
    & \quad -\langle C, (1-k)Q^{T_B}\rangle+ \langle C, W^{T_B}\rangle -\langle D, W^{T_B}\rangle + \langle D, (1+k)Q^{T_B}\rangle \\ 
    &\quad - \langle F, (1+k)Q^{T_B} \rangle + \langle F, kI\rangle + \langle F, W^{T_B} \rangle - \langle G, W^{T_B}\rangle + \langle G, kI \rangle - \langle G, (k-1)Q^{T_B} \rangle \\
    &= \frac{1}{2} + \tr[A] + \tr[B] + \tr[kF] + \tr[kG] \\
    &\quad + \frac{1}{2}\langle(\rho_0 - \rho_1),W\rangle - \langle A,W\rangle +\langle C^{T_B}, W\rangle - \langle D^{T_B}, W\rangle  + \langle F^{T_B}, W\rangle - \langle G^{T_B}, W\rangle \\
    &\quad - \langle B, Q\rangle - \langle (1-k)C^{T_B}, Q\rangle + \langle (1+k)D^{T_B}, Q\rangle - \langle (1+k)F^{T_B}, Q\rangle - \langle (k-1)G^{T_B}, Q\rangle. \notag
\end{aligned}
\end{equation}
The corresponding Lagrange dual function is
\begin{equation}
    g(A, B, C, D, F, G) = \inf_{W\geq 0,Q\geq 0} L(W, Q, A, B, C, D, F, G).
\end{equation}
Since $W \geq 0$ and $Q \geq 0$, it must hold that
\begin{subequations}
\begin{align}
    & \frac{1}{2}(\rho_0 - \rho_1) -A + C^{T_B} - D^{T_B} + F^{T_B} - G^{T_B} \leq 0 \\
    & -B - (1-k)C^{T_B} + (1+k)D^{T_B} - (1+k)F^{T_B} - (k-1)G^{T_B} \leq 0.
\end{align}
\end{subequations}
Thus the dual SDP is
\begin{align}
P_{\suc,e}^{\PPT}(\rho_0, \rho_1; k) = \min &\;\frac{1}{2} + \tr[A+B] + k\tr[F+G] \notag \\
     {\rm s.t.} & \; \frac{1}{2}(\rho_0 - \rho_1) -A + C^{T_B} - D^{T_B} + F^{T_B} - G^{T_B} \leq 0 \notag \\
     &  -B - (1-k)C^{T_B} + (1+k)D^{T_B}- (1+k)F^{T_B} - (k-1)G^{T_B} \leq 0.
\end{align}

\section{Proof of Theorems}\label{appendix:proof_of_thms}
\begin{lemma}[Weyl's inequality~\cite{horn2012matrix}]\label{lem:weyl_ineq}
Let $A, B \in M_n$ be Hermitian and let the respective eigenvalues of $A, B$, and $A+B$ be $\left\{\lambda_j(A)\right\}_{j=1}^n,\left\{\lambda_j(B)\right\}_{j=1}^n$, and $\left\{\lambda_j(A+B)\right\}_{j=1}^n$, in a non-decreasing order. Then
\begin{equation}\label{ieq:Weyl’sinequalities1}
    \lambda_i(A+B) \leq \lambda_{i+j}(A)+\lambda_{n-j}(B), \quad j=0,1, \ldots, n-i
\end{equation}
for each $i=1, \ldots, n$, with equality for some pair $i, j$ if and only if there is a nonzero vector $x$ such that $A x=\lambda_{i+j}(A) x, B x=\lambda_{n-j}(B) x$, and $(A+B) x=\lambda_i(A+B) x$. Also,
\begin{equation}\label{ieq:Weyl’sinequalities2}
\lambda_{i-j+1}(A)+\lambda_j(B) \leq \lambda_i(A+B), \quad j=1, \ldots, i    
\end{equation}
for each $i=1, \ldots, n$, with equality for some pair $i, j$ if and only if there is a nonzero vector $x$ such that $A x=\lambda_{i-j+1}(A) x, B x=\lambda_j(B) x$, and $(A+B) x=\lambda_i(A+B) x$. 
\end{lemma}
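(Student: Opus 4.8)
The plan is to derive both inequality families from the Courant–Fischer variational characterization of the ordered eigenvalues of a Hermitian matrix, and then to read off the equality conditions from a three-subspace dimension count. Recall that for Hermitian $H\in M_n$ with eigenvalues written non-decreasingly, one has the two dual formulas
\[
\lambda_i(H) = \min_{\dim S = i}\ \max_{0\neq x\in S}\ \frac{x^\dagger H x}{x^\dagger x}
= \max_{\dim S = n-i+1}\ \min_{0\neq x\in S}\ \frac{x^\dagger H x}{x^\dagger x},
\]
the extremum running over subspaces $S\subseteq\mathbb{C}^n$ of the indicated dimension. I will write $R_H(x)=x^\dagger Hx/x^\dagger x$ for the Rayleigh quotient and use the additivity $R_{A+B}(x)=R_A(x)+R_B(x)$ throughout.

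For the first (upper-bound) inequality I would use the \emph{min} form for $\lambda_i(A+B)$ and exhibit a good $i$-dimensional test subspace. Let $U_A$ be the span of the eigenvectors of $A$ for its $i+j$ smallest eigenvalues, so that $\dim U_A=i+j$ and $R_A(x)\leq\lambda_{i+j}(A)$ on $U_A$; let $U_B$ be the span of the eigenvectors of $B$ for its $n-j$ smallest eigenvalues, so that $\dim U_B=n-j$ and $R_B(x)\leq\lambda_{n-j}(B)$ on $U_B$. The bound $\dim(U_A\cap U_B)\geq(i+j)+(n-j)-n=i$ lets me choose an $i$-dimensional $S\subseteq U_A\cap U_B$, on which $R_{A+B}(x)\leq\lambda_{i+j}(A)+\lambda_{n-j}(B)$; the min-max formula then delivers the first inequality. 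The second (lower-bound) inequality is entirely parallel using the \emph{max} form: taking $U_A'$ (resp.\ $U_B'$) to be the span of the eigenvectors for the \emph{largest} eigenvalues so that $R_A\geq\lambda_{i-j+1}(A)$ on $U_A'$ and $R_B\geq\lambda_j(B)$ on $U_B'$, a dimension count gives an $(n-i+1)$-dimensional subspace of $U_A'\cap U_B'$ on which $R_{A+B}\geq\lambda_{i-j+1}(A)+\lambda_j(B)$.

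The equality statements are the delicate part, and the ``if'' direction is the easy half: a common eigenvector $x$ with $Ax=\lambda_{i+j}(A)x$ and $Bx=\lambda_{n-j}(B)x$ forces $(A+B)x=(\lambda_{i+j}(A)+\lambda_{n-j}(B))x$, which together with $(A+B)x=\lambda_i(A+B)x$ and $x\neq0$ makes the two scalars coincide. For the ``only if'' direction I would enlarge the intersection by a third subspace. Let $V^+$ be the span of the eigenvectors of $A+B$ for its $n-i+1$ largest eigenvalues, so $R_{A+B}\geq\lambda_i(A+B)$ on $V^+$. Since $\dim U_A+\dim U_B+\dim V^+=(i+j)+(n-j)+(n-i+1)=2n+1$, the triple intersection contains a nonzero $x$, and under the assumed equality the chain
\[
\lambda_i(A+B)\leq R_{A+B}(x)=R_A(x)+R_B(x)\leq\lambda_{i+j}(A)+\lambda_{n-j}(B)=\lambda_i(A+B)
\]
collapses, forcing $R_A(x)=\lambda_{i+j}(A)$, $R_B(x)=\lambda_{n-j}(B)$, and $R_{A+B}(x)=\lambda_i(A+B)$ at once. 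The equality case of the second inequality is handled identically, intersecting $U_A'$ and $U_B'$ with the span $V^-$ of the eigenvectors of $A+B$ for its $i$ smallest eigenvalues.

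The elementary fact that closes the argument — which I would isolate as a one-line sub-lemma — is that an \emph{extremal} Rayleigh value forces an exact eigenvector: if $x$ lies in the span of eigenvectors of $A$ whose eigenvalues are all $\leq\mu$ and $R_A(x)=\mu$, then expanding $x$ in that eigenbasis expresses $R_A(x)$ as a convex combination of eigenvalues $\leq\mu$ equal to $\mu$, so all weight sits on eigenvalue $\mu$ and hence $Ax=\mu x$; the reversed inequality covers the ``$\geq$'' cases. Applying this to the three equalities above yields precisely $Ax=\lambda_{i+j}(A)x$, $Bx=\lambda_{n-j}(B)x$, $(A+B)x=\lambda_i(A+B)x$, the asserted common eigenvector. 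I expect the main obstacle to be purely bookkeeping: keeping the index shifts, the ``smallest versus largest'' eigenvector spans, and the non-decreasing ordering mutually consistent across the two inequalities and their equality cases, since a single off-by-one slip in a dimension count would destroy the guarantee that the relevant intersections are nonempty.
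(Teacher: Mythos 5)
The paper does not prove this lemma at all: it is imported verbatim, with citation, from Horn and Johnson's \emph{Matrix Analysis}, so there is no in-paper proof to compare against. Your argument is correct and is essentially the standard proof from that cited source—Courant--Fischer plus a two-subspace dimension count $\dim(U_A\cap U_B)\geq i$ (resp.\ $n-i+1$) for the two inequality families, and for the equality cases the three-subspace bound $\dim(U\cap V\cap W)\geq \dim U+\dim V+\dim W-2n = 2n+1-2n=1$ together with the rigidity of an extremal Rayleigh quotient (a convex combination of eigenvalues $\leq\mu$ equal to $\mu$ concentrates on the $\mu$-eigenspace). Your dimension bookkeeping checks out in all four counts ($\dim U_A'=n-i+j$ and $\dim U_B'=n-j+1$ give triple sums of $2n+1$ in both equality arguments), so nothing further is needed.
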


\begin{lemma}\label{lem:rank_n_projector}
For quantum states $\rho_0$ and $\rho_1$ with {$\rank(\rho_0) = r_0 \leq r_1 = \rank(\rho_1)$}, let $M_+$ denote the projector onto the positive eigenspace of $\rho_0 - \rho_1$. It satisfies that {$\rank(M_+) \leq r_0$}.
\end{lemma}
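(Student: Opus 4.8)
The plan is to write $\rho_0-\rho_1$ as a sum of a positive semidefinite and a negative semidefinite operator and then bound the number of its positive eigenvalues using Weyl's inequality (Lemma~\ref{lem:weyl_ineq}). Set $n=d_Ad_B=\dim(\cH_A\ox\cH_B)$ and put $A:=\rho_0$ and $B:=-\rho_1$, so that $\rho_0-\rho_1=A+B$. Here $A\geq 0$ has rank $r_0$, so in the non-decreasing ordering its $n-r_0$ smallest eigenvalues all vanish; in particular $\lambda_{n-r_0}(A)=0$. Likewise $B=-\rho_1\leq 0$, so its largest eigenvalue obeys $\lambda_n(B)=-\lambda_{\min}(\rho_1)\leq 0$.

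The key step is to apply the first of Weyl's inequalities in Lemma~\ref{lem:weyl_ineq} to $A+B$ with the index choice $i=n-r_0$ and $j=0$; this choice is admissible because $0\leq j\leq n-i=r_0$. It yields
\begin{equation}
\lambda_{n-r_0}(\rho_0-\rho_1)\;\leq\;\lambda_{n-r_0}(A)+\lambda_n(B)\;\leq\;0+0\;=\;0 .
\end{equation}
Because the eigenvalues are listed non-decreasingly, this forces $\lambda_1(\rho_0-\rho_1)\leq\cdots\leq\lambda_{n-r_0}(\rho_0-\rho_1)\leq 0$, so at most the $r_0$ largest eigenvalues (those with indices $n-r_0+1,\dots,n$) can be strictly positive. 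Since $M_+$ is precisely the projector onto the span of eigenvectors with strictly positive eigenvalue, $\rank(M_+)$ equals the number of strictly positive eigenvalues, and hence $\rank(M_+)\leq r_0$.

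I do not expect a serious obstacle here; the only point requiring care is the index bookkeeping under the non-decreasing convention, namely verifying that $j=0$ is an allowed choice and that the $(n-r_0)$-th eigenvalue of $\rho_0$ is exactly $0$. As an independent sanity check one can avoid Weyl's inequality altogether: if $V_+$ denotes the positive eigenspace of $\rho_0-\rho_1$ and $0\neq v\in V_+\cap\ker\rho_0$, then $\langle v,(\rho_0-\rho_1)v\rangle>0$ while $\langle v,\rho_0 v\rangle=0$, forcing $\langle v,\rho_1 v\rangle<0$ and contradicting $\rho_1\geq 0$; hence $V_+\cap\ker\rho_0=\{0\}$ and $\dim V_+\leq n-\dim\ker\rho_0=r_0$. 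It is worth noting that both arguments use only that $\rho_0\geq 0$ has rank $r_0$ and that $\rho_1\geq 0$, so the hypothesis $r_0\leq r_1$ is not actually needed for this particular bound.
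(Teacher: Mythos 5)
Your main argument is exactly the paper's proof: the same application of Weyl's inequality from Lemma~\ref{lem:weyl_ineq} with $A=\rho_0$, $B=-\rho_1$, $i=n-r_0$, $j=0$, giving $\lambda_{n-r_0}(\rho_0-\rho_1)\leq 0$ and hence at most $r_0$ strictly positive eigenvalues. Your elementary alternative via $V_+\cap\ker\rho_0=\{0\}$ is also correct, as is your observation that the hypothesis $r_0\leq r_1$ is not actually used, but the core argument coincides with the paper's.
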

\begin{proof}
Denote $\rho_0 = \sum_j q_j \ketbra{\psi_j}{\psi_j}, \rho_1 = \sum_j p_j \ketbra{\phi_j}{\phi_j}$, where 
\begin{equation}
\begin{aligned}
    &q_1 = \cdots = q_{n-r_0} = 0 \leq q_{n-r_0+1} \leq \cdots \leq q_n < 1,\\
    &p_1 \geq \cdots \geq p_{r_1} \geq p_{r_1+1} = \cdots = p_n = 0.
\end{aligned}
\end{equation}
Consider the eigenvalues of $\rho_0-\rho_1$ in a non-decreasing order as $\{\lambda_k(\rho_0-\rho_1)\}_k$. 
By Eq.~\eqref{ieq:Weyl’sinequalities1} in Lemma~\ref{lem:weyl_ineq}, we have
    \begin{equation}
        \lambda_{n-r_0}(\rho_0-\rho_1) \leq q_{n-r_0} - p_n \leq 0.
    \end{equation}
    Then we can conclude that the number of positive eigenvalues of $\rho_0 - \rho_1$ is less than $r_0$. Thus, the projector $M_+$ onto the positive eigenspace of $\rho_0-\rho_1$ satisfies $\rank(M_+) \leq r_0$.
\end{proof}

\renewcommand\theproposition{\ref{thm:spec_upperbound}}
\setcounter{proposition}{\arabic{proposition}-1}
\begin{theorem}
Given two bipartite quantum states $\rho_0, \rho_1$ with the same prior probability, the entanglement cost of optimal PPT-discrimination satisfies
\begin{equation}
    E_{C}^{\PPT}(\rho_0 , \rho_1) \leq \log \left\lceil \left\| M_+^{T_B} - M_-^{T_B} \right\|_\infty \right\rceil,
\end{equation}
where $M_{+}$ and $M_{-}$ be the projectors onto the positive and non-positive eigenspaces of $\rho_0 - \rho_1$, respectively.
\end{theorem}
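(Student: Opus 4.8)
The plan is to exhibit one particular optimal discriminating POVM together with one particular PPT witness, and then read off the bound from Proposition~\ref{prop:ent_cost_dis}. First I would take $\boldsymbol{W}=\{M_+,M_-\}$ to be the Helstrom measurement, with $M_-=I-M_+$. Since $\tr(\rho_0-\rho_1)=0$, the sum of the positive eigenvalues of $\rho_0-\rho_1$ equals $\frac{1}{2}\|\rho_0-\rho_1\|_1$, so $\tr[(\rho_0-\rho_1)M_+]=\frac{1}{2}\|\rho_0-\rho_1\|_1$ and $\boldsymbol{W}$ is feasible for the optimization in Eq.~\eqref{Eq:ent_cost_dis}. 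Because that optimization minimizes $\lceil\mathscr{D}^{\cR}_{\infty}(\boldsymbol{W})\rceil$ over all optimal POVMs, it then suffices to produce an upper bound on $\mathscr{D}^{\cR}_{\infty}(\{M_+,M_-\})$ for this single choice.

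Next I would insert the trivial PPT POVM $\{E_0,E_1\}=\{I/2,\,I/2\}$ as the witness in the definition of the relative spectral PPT-distance. This is a valid PPT POVM, since $E_0+E_1=I$ and $(I/2)^{T_B}=I/2\geq 0$, and it has the useful property that $E_j^{T_B}=I/2$ is unchanged under partial transpose. With this witness the defining operator inequalities $-rE_j^{T_B}\leq M_j^{T_B}-E_j^{T_B}\leq rE_j^{T_B}$ collapse to the scalar condition $\|M_j^{T_B}-I/2\|_\infty\leq r/2$ for $j=0,1$. Since $M_-^{T_B}=I-M_+^{T_B}$ gives $M_-^{T_B}-I/2=-(M_+^{T_B}-I/2)$, both values of $j$ impose the same requirement, so the smallest admissible $r$ for this witness is
\begin{equation}
2\,\big\|M_+^{T_B}-I/2\big\|_\infty=\big\|2M_+^{T_B}-I\big\|_\infty=\big\|M_+^{T_B}-M_-^{T_B}\big\|_\infty .
\end{equation}
Hence $\mathscr{D}^{\cR}_{\infty}(\{M_+,M_-\})\leq\|M_+^{T_B}-M_-^{T_B}\|_\infty$, and since both $\lceil\cdot\rceil$ and $\log$ are nondecreasing, Proposition~\ref{prop:ent_cost_dis} yields $E_C^{\PPT}(\rho_0,\rho_1)\leq\log\lceil\mathscr{D}^{\cR}_{\infty}(\{M_+,M_-\})\rceil\leq\log\lceil\|M_+^{T_B}-M_-^{T_B}\|_\infty\rceil$, which is the claim.

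I do not expect a genuine obstacle here: the entire argument hinges on the single observation that the maximally mixed POVM $\{I/2,I/2\}$ is partial-transpose invariant, which is precisely what reduces the matrix inequalities in $\mathscr{D}^{\cR}_{\infty}$ to one spectral-norm estimate. The only points needing care are verifying that $\{M_+,M_-\}$ is simultaneously a projective POVM and optimal for discrimination, and handling possible zero eigenvalues of $\rho_0-\rho_1$ correctly so that $M_++M_-=I$ still holds (the positive and non-positive eigenspaces are complementary). Any sharpening of the bound would require replacing $I/2$ by a cleverer PPT witness, which is exactly the degree of freedom the SDP hierarchy of Theorem~\ref{thm:SDP_hierarchy} later optimizes over.
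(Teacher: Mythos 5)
Your proposal is correct and is essentially the paper's own argument: the paper likewise takes the Helstrom projector $W_0=M_+$ together with the partial-transpose-invariant PPT reference $Q_0=I/2$, turns the constraint $-kQ_0^{T_B}\leq W_0^{T_B}-Q_0^{T_B}\leq kQ_0^{T_B}$ into $\frac{1-k}{2}I\leq M_+^{T_B}\leq\frac{1+k}{2}I$, and rewrites this as $-kI\leq M_+^{T_B}-M_-^{T_B}\leq kI$. Your phrasing via $\mathscr{D}^{\cR}_{\infty}$ and Proposition~\ref{prop:ent_cost_dis} is just the reformulated version of the same feasibility check in Eq.~\eqref{Eq:ent_cost}, so there is no substantive difference.
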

\begin{proof}
Given the quantum states $\rho_0$ and $\rho_1$, let $M_+$ be the projector onto the positive eigenspace of $\rho_0-\rho_1$.  We will show that if
\begin{equation}\label{eq:partial_tb_helstrom}
    \frac{1-k}{2} I\leq M_+^{T_B} \leq \frac{1+k}{2} I    
\end{equation}
satisfies for some positive integer $k$, $\{ W_{0} = M_+, Q_{0} = I/2, k\}$ is a feasible solution to the optimization problem in Eq.~\eqref{Eq:ent_cost}.
Taking $Q_{0} = I/2$ and $W_{0} = M_+$ and $(1-k)/2 \cdot I\leq M_+^{T_B} \leq (1+k)/2 \cdot I$, we can easily check that constraints in Eq.~\eqref{Eq:ent_cost} are satisfied.
Denoting $M_-$ as the projector onto the non-positive eigenspace of $\rho_0 - \rho_1$ such that $M_+ + M_- = I$, we arrange Eq.~\eqref{eq:partial_tb_helstrom} as $-kI \leq M_+^{T_B} - M_-^{T_B} \leq kI$ and complete the proof.
\end{proof}
\renewcommand{\theproposition}{S\arabic{proposition}}

\renewcommand\theproposition{\ref{cor:rank_Ec}}
\setcounter{proposition}{\arabic{proposition}-1}
\begin{corollary}
For two bipartite quantum states $\rho_0,\rho_1$ with $\rank\rho_0 = r_0 \leq r_1 = \rank \rho_1$,
\begin{equation}
E_C^{\PPT} (\rho_0, \rho_1) \leq \log \eta,
\end{equation}
where $\eta = \max \{2r_0-1, r_0+1\}$.
\end{corollary}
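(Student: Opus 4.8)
The plan is to reduce the corollary to Theorem~\ref{thm:spec_upperbound} by bounding the spectral quantity $\| M_+^{T_B} - M_-^{T_B}\|_\infty$ purely in terms of $r_0$. First I would use that $M_+$ and $M_-$ partition the identity, so $M_- = I - M_+$ and hence $M_+^{T_B} - M_-^{T_B} = 2M_+^{T_B} - I$. Thus it suffices to control the extreme eigenvalues of the partial transpose of the single projector $M_+$.

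Next I would invoke Lemma~\ref{lem:rank_n_projector} to set $r := \rank(M_+) \le r_0$, and then normalize: since $M_+$ is a rank-$r$ orthogonal projector, $\sigma := M_+/r$ is a genuine density matrix on $\cH_A\ox\cH_B$. The key step is the standard spectral localization that the partial transpose of any bipartite density matrix has all its eigenvalues in $[-\tfrac12, 1]$. I would establish this in the only place it is nontrivial, namely on pure states, via the Schmidt decomposition: for $\ket{\psi} = \sum_i \sqrt{\lambda_i}\,\ket{ii}$ the eigenvalues of $(\ketbra{\psi}{\psi})^{T_B}$ are the numbers $\lambda_i$ together with the pairs $\pm\sqrt{\lambda_i\lambda_j}$ for $i<j$, so they lie in $[-\tfrac12,1]$ because $\sqrt{\lambda_i\lambda_j}\le (\lambda_i+\lambda_j)/2 \le \tfrac12$ and $\lambda_i \le 1$. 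The general mixed-state case then follows from the convexity of $\lambda_{\max}$ and the concavity of $\lambda_{\min}$ applied to the (linear) partial transpose of a convex decomposition into pure states. Rescaling by $r$ gives that $M_+^{T_B} = r\,\sigma^{T_B}$ has spectrum contained in $[-r/2,\, r]$.

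Finally I would combine these bounds. From $\lambda_{\max}(M_+^{T_B}) \le r$ and $\lambda_{\min}(M_+^{T_B}) \ge -r/2$ we obtain $\lambda_{\max}(2M_+^{T_B}-I) \le 2r-1$ and $\lambda_{\min}(2M_+^{T_B}-I) \ge -r-1$, whence
\[
\| M_+^{T_B} - M_-^{T_B}\|_\infty = \| 2M_+^{T_B} - I\|_\infty \le \max\{2r-1,\, r+1\} \le \max\{2r_0-1,\, r_0+1\} = \eta,
\]
where the last inequality uses that both $2r-1$ and $r+1$ are increasing in $r$. Since $\eta$ is an integer, $\lceil \| M_+^{T_B} - M_-^{T_B}\|_\infty \rceil \le \eta$, and Theorem~\ref{thm:spec_upperbound} then yields $E_C^{\PPT}(\rho_0,\rho_1) \le \log\eta$. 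I expect the only genuine obstacle to be the eigenvalue localization $[-\tfrac12,1]$ for the partial transpose of a density matrix; the rest is bookkeeping. I would also note the degenerate case $\rank(M_+)=0$ (i.e.\ $\rho_0-\rho_1$ has no positive eigenvalue), where $M_+^{T_B} - M_-^{T_B} = -I$ has norm $1 \le \eta$, so the bound holds trivially.
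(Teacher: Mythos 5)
Your proposal is correct and takes essentially the same route as the paper: both bound $\rank(M_+)\le r_0$ via Lemma~\ref{lem:rank_n_projector}, use the spectral localization $[-\tfrac12,1]$ for the partial transpose of pure states (which the paper cites rather than rederives from the Schmidt decomposition) together with the super/subadditivity of extreme eigenvalues to get $-\tfrac{r_0}{2}I\le M_+^{T_B}\le r_0 I$, and then conclude via Theorem~\ref{thm:spec_upperbound}. The only differences are cosmetic (normalizing $M_+/r$ to a density matrix, and your welcome handling of the degenerate case $M_+=0$).
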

\begin{proof}
By Lemma~\ref{lem:rank_n_projector}, we have that the projector $M_+$ onto the positive eigenspace of $\rho_0 - \rho_1$ satisfies $\rank M_+ \leq r_0 $.
Suppose $M_+$ has a spectral decomposition $M_+ = \sum_j^{r_0} \ketbra{\psi_j}{\psi_j}$. It follows
\begin{equation}\label{Eq:min_eigen_PT}
    \lambda_{\min}(M_+^{T_B}) \geq \sum_j^{r_0} \lambda(\ketbra{\psi_j}{\psi_j}^{T_B}) \geq \sum_{j}^{r_0} -\frac{1}{2} = -\frac{1}{2}r_0,
\end{equation}
where the second inequality based on the result that the partial transpose of any pure state has eigenvalues between $\left[-1/2, 1\right]$~\cite{Rana2013}, i.e. $-\frac{1}{2} \leq \lambda_{\min}(\ketbra{\psi_j}{\psi_j}^{T_B}) \leq \lambda_{\max}(\ketbra{\psi_j}{\psi_j}^{T_B}) \leq 1$. Similarly, for $\lambda_{\max}$
\begin{equation}\label{Eq:max_eigen_PT}
    \lambda_{\max}(M_+^{T_B}) \leq \sum_j^{r_0} \lambda(\ketbra{\psi_j}{\psi_j}^{T_B}) \leq \sum_j^{r_0} 1 = r_0.
\end{equation}
Combining inequalities in Eq.~\eqref{Eq:min_eigen_PT} and Eq.~\eqref{Eq:max_eigen_PT}, we can deduce that
\begin{equation}
    - \frac{r_0}{2}I \leq M_+^{T_B} \leq r_0I.
\end{equation}
By Theorem~\ref{thm:spec_upperbound}, we have $E_C^{\PPT} (\rho_0, \rho_1) \leq \log \eta$, where $\eta = \max \{2r_0-1, r_0+1\}$,
\end{proof}
\renewcommand{\theproposition}{S\arabic{proposition}}

\section{Entanglement Cost of PPT discrimination SDP Lower and Upper Bounds}\label{Appendix:sdp_lower_upper}
\begin{proposition}\label{prop:sdp_low_bound_appendix}
Given quantum states $\rho_0$ and $\rho_1$, it satisfies that $E^{\PPT}_{C}(\rho_0,\rho_1) \geq \log \lceil \eta_l  \rceil$ where
\begin{align}
    \eta_l = \min &\;  k \notag \\
         {\rm s.t.} &\; 0\leq W_{AB}\leq I_{AB}, 0\leq \hat{Q}_{AB}\leq (k+1) I_{AB}, \notag\\
         & \;2\tr\left[ (\rho_1 - \rho_0) W_{AB}\right] = \| \rho_0 - \rho_1\|_1, \notag\\
         &  -\hat{Q}_{AB}^{T_B} \leq W_{A B}^{T_B} \leq  \hat{Q}_{AB}^{T_B} , \\
         & \;\hat{Q}_{AB}^{T_B} -k I_{AB} \leq W_{A B}^{T_B} \leq kI_{AB}. \notag
    \end{align}
\end{proposition}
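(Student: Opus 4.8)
The plan is to exhibit the program defining $\eta_l$ as a \emph{relaxation} of the exact program in Eq.~\eqref{Eq:ent_cost} that characterises $E^{\PPT}_{C}(\rho_0,\rho_1)$, and then to invoke monotonicity of the ceiling. Write $k^{\star}$ for the optimal value of the minimisation in Eq.~\eqref{Eq:ent_cost} before the ceiling is applied, so that $E^{\PPT}_{C}(\rho_0,\rho_1)=\log\lceil k^{\star}\rceil$. It suffices to produce, from each feasible point of Eq.~\eqref{Eq:ent_cost}, a feasible point of the $\eta_l$-program carrying the same value of $k$: this gives $\eta_l\le k^{\star}$, and since both $\lceil\cdot\rceil$ and $\log$ are nondecreasing we conclude $E^{\PPT}_{C}(\rho_0,\rho_1)=\log\lceil k^{\star}\rceil\ge\log\lceil\eta_l\rceil$. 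Intuitively the $\eta_l$-program is obtained by relabelling $W:=W_1$, $\hat Q:=(k+1)Q_1$, discarding the constraint $Q_1^{T_B}\ge0$, and weakening two of the operator inequalities.

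Concretely, given a feasible $(W_0,W_1,Q_0,Q_1,k)$ for Eq.~\eqref{Eq:ent_cost}, I would set $W:=W_1$ and $\hat Q:=(k+1)Q_1$. The effect bound $0\le W\le I$ is immediate, and $0\le\hat Q\le(k+1)I$ follows from $0\le Q_1\le I$. The trace condition transcribes because $\tr[(\rho_0-\rho_1)W_0]=\tfrac12\|\rho_0-\rho_1\|_1$ together with $W_0+W_1=I$ and $\tr(\rho_0-\rho_1)=0$ yields $2\tr[(\rho_1-\rho_0)W_1]=\|\rho_0-\rho_1\|_1$. The two blocks of the relative-distance constraint supply the operator inequalities: the $j=1$ block $(1-k)Q_1^{T_B}\le W_1^{T_B}\le(1+k)Q_1^{T_B}$ gives $W^{T_B}\le\hat Q^{T_B}$ directly and, using $Q_1^{T_B}\ge0$ (valid since $\{Q_0,Q_1\}$ is a PPT POVM), also $W^{T_B}\ge(1-k)Q_1^{T_B}\ge-\hat Q^{T_B}$; rewriting the $j=0$ block through $W_0=I-W_1$ and $Q_0=I-Q_1$ gives $W^{T_B}\ge(1+k)Q_1^{T_B}-kI=\hat Q^{T_B}-kI$.

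The one step that does not go through unconditionally — and the main obstacle — is the remaining bound $W^{T_B}\le kI$: the $j=0$ block only delivers $W_1^{T_B}\le kI+(1-k)Q_1^{T_B}$, which needs $(1-k)Q_1^{T_B}\le0$, i.e. $k\ge1$. I would dispose of this by a dichotomy on the value of the relative spectral PPT-distance. If $\mathscr{D}^{\cR}_{\infty}(\boldsymbol W)=r<1$ for the optimising POVM, then its defining inequality in Eq.~\eqref{eq:relative_spectral_ppt} forces $W_j^{T_B}\ge(1-r)Q_j^{T_B}\ge0$, so $\boldsymbol W$ is itself PPT and $r=0$; hence $k^{\star}\in\{0\}\cup[1,\infty)$. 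When $k^{\star}\ge1$ the construction above applies to the optimiser and yields $\eta_l\le k^{\star}$. When $k^{\star}=0$, PPT measurements already attain the optimal success probability, so $E^{\PPT}_{C}=0$ and it only remains to verify $\lceil\eta_l\rceil\le1$: here $\{W_0,W_1\}$ is PPT, giving $0\le W_1^{T_B}\le I$ (the upper bound from $W_0^{T_B}=I-W_1^{T_B}\ge0$), so $W=W_1$, $\hat Q=2W_1$, $k=1$ is feasible for the $\eta_l$-program, whence $\eta_l\le1$ and $\log\lceil\eta_l\rceil\le0=E^{\PPT}_{C}$. This settles both regimes and completes the argument.
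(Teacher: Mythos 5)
Your proof is correct and follows essentially the same route as the paper's: exhibit the $\eta_l$-program as a relaxation of the bilinear program in Eq.~\eqref{Eq:ent_cost} via the substitution $\hat Q=(k+1)Q$, so that every feasible point of the exact program maps to a feasible point of the $\eta_l$-program with the same $k$. You are in fact more careful than the paper, which justifies the relaxation $W^{T_B}\le -(k-1)Q^{T_B}+kI\Rightarrow W^{T_B}\le kI$ only by ``$k\ge 0$ and $Q^{T_B}\ge 0$'' even though it genuinely requires $k\ge 1$; your dichotomy on $\mathscr{D}^{\cR}_{\infty}(\boldsymbol{W})<1$ versus $\ge 1$ closes exactly that gap, and taking $W=W_1$ rather than $W_0$ also repairs the sign of the trace constraint.
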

\begin{proof}
    We first find the SDP lower bound for the entanglement cost of PPT discrimination. The basic idea is to relax the constraints of the bi-linear optimization problem and introduce a new variable.  Since $k \geq 0$ and $Q_{AB}^{T_B} \geq 0$, the constraint $(-k+1) Q_{A B}^{T_B} \leq W_{A B}^{T_B} \leq  (k+1) Q_{A B}^{T_B}$ can be relaxed to
    \begin{equation}
        -(k+1) Q_{A B}^{T_B} \leq W_{A B}^{T_B} \leq  (k+1) Q_{A B}^{T_B}.
    \end{equation}
    And the constraint $(k+1) Q_{A B}^{T_B} -k I_{AB} \leq W_{A B}^{T_B} \leq -(k-1)Q_{AB}^{T_B} + kI_{AB}$ can be relax to
    \begin{equation}
        (k+1)Q_{AB}^{T_B} -k I_{AB} \leq W_{A B}^{T_B} \leq kI_{AB}.
    \end{equation}
    By introducing a new variable $\hat{Q}_{AB} = (k+1)Q_{AB}$, we have the SDP lower bound.
\end{proof}

\begin{proposition}
Given quantum states $\rho_0$ and $\rho_1$, it satisfies that $E^{\PPT}_{C} (\rho_0, \rho_1)  \leq   \log \lceil \eta_u  \rceil$, where ,
    \begin{subequations}
    \begin{align}
    \eta_u = \log \min& \;  k \notag\\
        {\rm s.t.} & \; 0\leq W_{AB}\leq I_{AB}, 0\leq \hat{Q}_{AB}\leq (k+1) I_{AB}, \notag\\
         & \;2\tr\left[ (\rho_1 - \rho_0) W_{AB}\right] = \| \rho_0 - \rho_1\|_1 , \notag \\
         & \;0 \leq W_{A B}^{T_B} \leq \hat{Q}_{AB}^{T_B},\\
         & \;\hat{Q}_{AB}^{T_B} -k I_{AB} \leq W_{A B}^{T_B} \leq -\hat{Q}_{AB}^{T_B} + kI_{AB}. \notag
    \end{align}
    \end{subequations}
\end{proposition}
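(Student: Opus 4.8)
The plan is to mirror the construction behind the lower bound in Proposition~\ref{prop:sdp_low_bound_appendix}, but to replace each \emph{relaxation} of the bilinear feasibility problem in Eq.~\eqref{Eq:w0_ppt_optimization} by the corresponding \emph{tightening}. The logic is dual to that of the lower bound: an upper bound on $E_C^{\PPT}$ follows if, for every feasible point of the proposed SDP, I can exhibit a genuinely feasible point of the exact (bilinear) program in Eq.~\eqref{Eq:ent_cost}. The image of the SDP feasible set then sits \emph{inside} the exact feasible set, so the exact minimum over $k$ is at most the SDP optimum $\eta_u$, and taking ceilings gives $E_C^{\PPT}(\rho_0,\rho_1)=\log\lceil k_{\min}\rceil\le\log\lceil\eta_u\rceil$.

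Concretely, I would first recall the two partial-transpose constraints of Eq.~\eqref{Eq:w0_ppt_optimization}, namely $(1-k)Q^{T_B}\le W^{T_B}\le(1+k)Q^{T_B}$ and $(k+1)Q^{T_B}-kI\le W^{T_B}\le-(k-1)Q^{T_B}+kI$, together with $0\le W\le I$, $0\le Q\le I$ and the Helstrom-optimality condition. As in the lower-bound proof I linearize the bilinear terms $kQ^{T_B}$ through the substitution $\hat Q=(k+1)Q$, under which $0\le Q\le I$ becomes $0\le\hat Q\le(k+1)I$. Where the lower bound loosens the lower branch of the first constraint and the upper branch of the second, here I instead strengthen exactly those two branches: I replace $(1-k)Q^{T_B}\le W^{T_B}$ by $0\le W^{T_B}$, and $W^{T_B}\le-(k-1)Q^{T_B}+kI$ by $W^{T_B}\le-\hat Q^{T_B}+kI$. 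Using $Q^{T_B}\ge0$ and $k\ge1$ one checks $(1-k)Q^{T_B}\le0$ and $-\hat Q^{T_B}+kI=-(k+1)Q^{T_B}+kI\le-(k-1)Q^{T_B}+kI$, so both replacements shrink the feasible set; the remaining two branches are unchanged and coincide, after writing $\hat Q^{T_B}=(k+1)Q^{T_B}$, with $0\le W^{T_B}\le\hat Q^{T_B}$ and $\hat Q^{T_B}-kI\le W^{T_B}$. This reproduces exactly the constraints in the statement.

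The crucial verification is that a feasible point of the tightened SDP still yields an admissible PPT POVM $\{Q,I-Q\}$, since that is what makes the recovered triple feasible for Eq.~\eqref{Eq:ent_cost}. Setting $Q=\hat Q/(k+1)$, the bound $0\le\hat Q\le(k+1)I$ gives $0\le Q\le I$, while $0\le W^{T_B}\le\hat Q^{T_B}$ forces $\hat Q^{T_B}\ge0$, hence $Q^{T_B}\ge0$. Combining $W^{T_B}\ge0$ with $W^{T_B}\le-\hat Q^{T_B}+kI$ yields $\hat Q^{T_B}\le kI$, so $Q^{T_B}\le\frac{k}{k+1}I\le I$ and therefore $(I-Q)^{T_B}=I-Q^{T_B}\ge0$; thus both $Q$ and $I-Q$ are PPT. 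The implication $0\le W^{T_B}\le\hat Q^{T_B}\Rightarrow(1-k)Q^{T_B}\le W^{T_B}\le(1+k)Q^{T_B}$ and the implication $\hat Q^{T_B}-kI\le W^{T_B}\le-\hat Q^{T_B}+kI\Rightarrow(k+1)Q^{T_B}-kI\le W^{T_B}\le-(k-1)Q^{T_B}+kI$ then recover the two relative-spectral-PPT constraints $-kQ_j^{T_B}\le W_j^{T_B}-Q_j^{T_B}\le kQ_j^{T_B}$ of Eq.~\eqref{Eq:ent_cost} for both $j=0,1$ (one per POVM element, with $W_0=I-W$, $W_1=W$, $Q_0=I-Q$, $Q_1=Q$), and the SDP equality $2\tr[(\rho_1-\rho_0)W]=\|\rho_0-\rho_1\|_1$ is precisely the Helstrom condition $\tr[(\rho_0-\rho_1)W_0]=\frac12\|\rho_0-\rho_1\|_1$ under this identification.

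Assembling these facts, any $(W,\hat Q,k)$ feasible for the proposed SDP maps to a triple feasible for the program defining $E_C^{\PPT}$ in Eq.~\eqref{Eq:ent_cost}, whence $E_C^{\PPT}(\rho_0,\rho_1)\le\log\lceil\eta_u\rceil$. I expect the main obstacle to be not the algebra of the inequalities but precisely this PPT-admissibility check: one must confirm that tightening keeps the feasible set inside the physically meaningful region where $\{Q,I-Q\}$ is a valid PPT POVM, which is exactly where the derived bound $\hat Q^{T_B}\le kI$ is needed. A secondary point to handle with care is the orientation of the argument — because we tighten rather than relax, the subset relation between feasible regions must be invoked so that the SDP optimum \emph{upper}-bounds the entanglement cost.
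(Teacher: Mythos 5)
Your proposal is correct and follows essentially the same route as the paper's proof: tighten the two offending branches of the bilinear constraints in Eq.~\eqref{Eq:w0_ppt_optimization} (replacing $(1-k)Q^{T_B}\leq W^{T_B}$ by $0\leq W^{T_B}$ and $W^{T_B}\leq -(k-1)Q^{T_B}+kI$ by $W^{T_B}\leq -(k+1)Q^{T_B}+kI$) and linearize via $\hat Q=(k+1)Q$, so that the SDP feasible set maps inside the exact one and its optimum upper-bounds $k_{\min}$. Your explicit check that any SDP-feasible point recovers a valid PPT POVM $\{Q,I-Q\}$ and the Helstrom condition is left implicit in the paper, but it is the same argument, just spelled out more carefully.
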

\begin{proof}
    We then find the SDP upper bound for the entanglement cost of PPT discrimination.  The key idea is to restrict the constraints of the bi-linear optimization problem.  Since $k \geq 0$ and $Q_{AB}^{T_B} \geq 0$, we can restrict the constraint $(-k+1) Q_{A B}^{T_B} \leq W_{A B}^{T_B} \leq  (k+1) Q_{A B}^{T_B}$ as
    \begin{equation}
        0 \leq W_{A B}^{T_B} \leq  (k+1) Q_{A B}^{T_B},
    \end{equation}
    where this restriction requires $k \geq 1$.
    And the constraint $(k+1) Q_{A B}^{T_B} -k I_{AB} \leq W_{A B}^{T_B} \leq - (k-1) Q_{A B}^{T_B} + kI_{AB}$ can be restricted to
    \begin{equation}
        (k+1) Q_{A B}^{T_B} -k I_{AB} \leq W_{A B}^{T_B} \leq - (k+1) Q_{A B}^{T_B} + kI_{AB}.
    \end{equation}
    Introducing a new variable $\hat{Q}_{AB} = (k+1)Q_{AB}$, we have the SDP upper bound.
\end{proof}


\section{Qutrit States Example}\label{appendix:qutr_eg}

\begin{figure}[H]
\begin{equation}\label{eq:qutrit_sigma}
    \sigma = \left[ 
    \begin{array}{*{9}{r}}
    0.0601  &  0.0327 &  -0.0601  & -0.0068  & -0.0024 &  -0.0040  & -0.0665  & -0.0261  & -0.0286 \\
    0.0327  &  0.0990 &  -0.0789  &  0.0068  & -0.0522  & -0.0660  & -0.0272  &  0.0309  & -0.0397 \\
   -0.0601 &  -0.0789  &  0.1812 &  -0.0067 &   0.0497 &   0.0254  &  0.0548  &  0.0620 &   0.0628\\
   -0.0068  &  0.0068  & -0.0067  &  0.0141  &  0.0108  & -0.0137  &  0.0229  &  0.0277  &  0.0164 \\
   -0.0024  & -0.0522 &   0.0497 &   0.0108 &   0.1078  & -0.0278 & -0.0031 &  -0.0218  &  0.0075 \\
   -0.0040 &  -0.0660  &  0.0254  & -0.0137  & -0.0278   & 0.1253  &  0.0173  & -0.0002  &  0.0593 \\
   -0.0665  & -0.0272  &  0.0548  &  0.0229 &  -0.0031 &   0.0173  &  0.1195 &   0.0973 &   0.0587 \\
   -0.0261 &   0.0309 &   0.0620 &   0.0277 &  -0.0218 &  -0.0002 &   0.0973  &  0.1906 &   0.0840 \\
   -0.0286 &  -0.0397 &   0.0628 &   0.0164 &   0.0075&    0.0593 &   0.0587 &   0.0840 &   0.1025 
    \end{array}    
    \right]
\end{equation}
\end{figure}

\renewcommand\theproposition{\ref{prop:LOCC_eg}}
\setcounter{proposition}{\arabic{proposition}-1}
\begin{proposition}
    There exists a pair of bipartite states $\{\rho, \sigma\}$ for which the optimal discrimination concerning general entangled POVMs is not achievable by LOCC with the assistance of a single Bell state.
\end{proposition}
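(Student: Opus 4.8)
The plan is to prove this existence statement by exhibiting one explicit pair of two‑qutrit states and certifying, via semidefinite programming, that a single Bell state does not suffice \emph{even for the larger class of} PPT POVMs; the conclusion for LOCC then follows for free from the inclusion $\LOCC \subseteq \PPT$ in Eq.~\eqref{Eq:povm_hire}. Concretely, I would take $\sigma \in \cD(\cH_3 \ox \cH_3)$ to be the density matrix displayed in Eq.~\eqref{eq:qutrit_sigma}, pair it with a fixed two‑qutrit \emph{mixed} state $\rho$ (necessarily of rank at least two, since Proposition~\ref{prop:pure_mixed_Ec} already guarantees that one ebit suffices whenever a pure state is involved), and assign equal prior probability $1/2$ to each.

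First I would fix the benchmark achievable by unrestricted global measurements. By the Helstrom bound this is
\begin{equation}
    P_{\suc}(\rho, \sigma) = \frac{1}{2}\left(1 + \frac{1}{2}\|\rho - \sigma\|_1\right),
\end{equation}
a direct trace‑norm evaluation on the $9 \times 9$ Hermitian operator $\rho - \sigma$. Second, I would compute the best success probability attainable by PPT POVMs assisted by a single Bell state, i.e. the case $k = 2$ (so that $\log k = 1$ ebit). This is precisely the quantity $P_{\suc,e}^{\PPT}(\rho, \sigma; 2)$, which by Proposition~\ref{prop:ppt_ave_suc_prob} equals the value of an explicit SDP in the operator variables $W_0, W_1, Q_0, Q_1$. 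The heart of the example is then the strict gap
\begin{equation}
    P_{\suc,e}^{\PPT}(\rho, \sigma; 2) < P_{\suc}(\rho, \sigma).
\end{equation}
Since $\LOCC \subseteq \PPT$, the entanglement‑assisted success probability under LOCC with one Bell state is upper bounded by $P_{\suc,e}^{\PPT}(\rho,\sigma;2)$, hence also lies strictly below $P_{\suc}(\rho,\sigma)$; this says exactly that the optimal global discrimination is not reproducible by LOCC with a single Bell state, which is the assertion.

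The delicate point is that the gap is obtained from a numerically solved SDP, so its strictness must be made trustworthy rather than an artifact of floating‑point error. To turn the numerical observation into a rigorous certificate, I would invoke the dual SDP derived in Appendix~\ref{appendix:dual_EAASP}: by weak duality, \emph{any} dual‑feasible tuple $(A,B,C,D,F,G)$ furnishes an upper bound on $P_{\suc,e}^{\PPT}(\rho,\sigma;2)$ through its objective $\tfrac12 + \tr[A+B] + 2\tr[F+G]$. Exhibiting one such tuple whose objective value is strictly less than $P_{\suc}(\rho,\sigma)$ certifies the insufficiency of one Bell state with no appeal to numerical optimality. Producing this dual witness (and symmetrising it so the bound is clean) is the main obstacle; once it is in hand, the remaining steps — the Helstrom evaluation and the invocation of $\LOCC \subseteq \PPT$ — are routine.
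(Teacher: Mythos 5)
Your proposal is correct and follows essentially the same route as the paper: an explicit two‑qutrit pair, a certified upper bound on $P_{\suc,e}^{\PPT}(\rho,\sigma;2)$ obtained from a feasible point of the dual SDP of Appendix~\ref{appendix:dual_EAASP} via weak duality, a lower bound on the unrestricted success probability (the paper certifies the Helstrom value through a primal feasible point rather than an exact trace‑norm evaluation, but this is immaterial), and the inclusion $\LOCC\subseteq\PPT$ to transfer the gap to LOCC. The only thing missing is the concrete choice of $\rho$, which the paper specifies in Eq.~\eqref{eq:qutrit_rho}.
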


\begin{proof}
Consider the following bipartite qutrit state $\rho$,
\begin{equation}\label{eq:qutrit_rho}
\begin{aligned}
    \ket{\psi_0} = a_0 \ket{01} + &a_1 \ket{02} + a_2 \ket{20}, \;\;  \ket{\psi_1} = a_0\ket{10} + a_1\ket{21} + a_2\ket{12}, \\
    & \rho = \frac{1}{5}\ketbra{\psi_0}{\psi_0} + \frac{4}{5}\ketbra{\psi_1}{\psi_1},
\end{aligned}
\end{equation}
where $a_0 = \frac{\sqrt{5}}{5}$, $a_1 = \frac{3\sqrt{5}}{10}$ and $a_2 =  \frac{1}{2}\sqrt{\frac{7}{5}}$, and a bipartite state $\sigma$ as shown in Eq.~\eqref{eq:qutrit_sigma}. For $\rho$ and $\sigma$, we computed the dual SDP in Eq.~\eqref{Eq:suc_prob_k} and primal SDP in Eq.~\eqref{Eq:suc_prob} using CVX~\cite{cvx,gb08} in Matlab~\cite{MATLAB}. Then borrowing the idea of \textit{computer-assisted proofs} given in Ref.~\cite{Bavaresco_2021}, we construct strictly feasible solutions for SDPs associated with $\{\rho,\sigma\}$ and $\{\rho\ox\Phi_2^+, \sigma\ox\Phi_2^+\}$. Consequently, we obtain
\begin{equation}
    P_{\suc,e}^{\PPT}(\rho,\sigma;2) \leq 0.9125 < 0.9135 \leq P_{\suc}(\rho,\sigma;2),
\end{equation}
where the first inequality is indicated by the fact that the dual SDP in Eq.~\eqref{Eq:suc_prob_k} is a minimization problem and the last inequality is indicated by the fact that the primal SDP in Eq.~\eqref{Eq:suc_prob} is a maximization problem. Therefore we could conclude $P_{\suc,e}^{\PPT}(\rho,\sigma;2) < P_{\suc}(\rho,\sigma;2)$ for this pair of states. Owing to the inclusion of $\LOCC\subseteq\PPT$, we have $P_{\suc,e}^{\LOCC}(\rho,\sigma;2) < P_{\suc}(\rho,\sigma;2)$, implying that a single ebit does not suffice for optimally discriminating $\rho$ and $\sigma$ with equal prior probability by LOCC.
\end{proof}
\renewcommand{\theproposition}{S\arabic{proposition}}

\end{document}